\documentclass[lettersize,journal]{IEEEtran}
\IEEEoverridecommandlockouts

\usepackage{amsmath,amssymb,amsfonts}
\usepackage{graphicx}
\usepackage{textcomp}
\usepackage{url}
\usepackage{xcolor}
\usepackage{todonotes}
\usepackage{dsfont}
\usepackage[square,numbers]{natbib}
\usepackage{hyperref}
\usepackage{easyReview}
\usepackage{quantikz}
\usepackage{algorithm}
\usepackage{algpseudocode}
\usepackage{mathtools}
\usepackage{longtable}
\usepackage{longtable}
\usepackage{enumitem}
\usepackage{balance} 
\usepackage{orcidlink}
\usepackage{listings}
\usepackage{tabularx}
\usepackage{booktabs}
\usepackage{array}
\usepackage{multirow}
\usepackage[most]{tcolorbox}

\usepackage{xcolor}
\usepackage{fvextra} 
\usepackage{subcaption}
\usepackage{caption}
\usepackage{float}

\DefineVerbatimEnvironment{CodeBlock}{Verbatim}
{formatcom=\color{blue}, fontsize=\small}

\usepackage{tikz}
\usetikzlibrary{positioning,arrows.meta,fit,backgrounds}

\newcommand{\Adv}{\mathsf{Adv}}

\newcommand{\LocGen}{\mathsf{LocGen}}
\newcommand{\Emb}{\mathsf{Emb}}
\newcommand{\Comp}{\mathsf{Comp}}
\newcommand{\Ext}{\mathsf{Ext}}
\newcommand{\Read}{\mathsf{Read}}
\newcommand{\Write}{\mathsf{Write}}

\definecolor{codegreen}{rgb}{0,0.6,0}
\definecolor{codegray}{rgb}{0.5,0.5,0.5}
\definecolor{codepurple}{rgb}{0.58,0,0.82}
\definecolor{backcolour}{rgb}{0.95,0.95,0.92}
\lstdefinestyle{pythonstyle}{
	backgroundcolor=\color{backcolour},
	commentstyle=\color{codegreen},
	keywordstyle=\color{magenta},
	numberstyle=\tiny\color{codegray},
	stringstyle=\color{codepurple},
	basicstyle=\ttfamily\tiny,
	breakatwhitespace=false,
	breaklines=true,
	captionpos=b,
	keepspaces=true,
	numbers=left,
	numbersep=5pt,
	showspaces=false,
	showstringspaces=false,
	showtabs=false,
	tabsize=2,
	language=Python
}

\def\BibTeX{{\rm B\kern-.05em{\sc i\kern-.025em b}\kern-.08em
		T\kern-.1667em\lower.7ex\hbox{E}\kern-.125emX}}

\newenvironment{proof}{\emph{Proof.}}{\hfill$\square$}
\newtheorem{theorem}{Theorem}
\newtheorem{lemma}[theorem]{Lemma}

\newtheorem{corollary}{Corollary}
\newtheorem{definition}{Definition}
\newtheorem{proposition}{Proposition}

\usepackage[most]{tcolorbox}

\tcbset{
	algostyle/.style={
		colback=white,
		colframe=black,
		boxrule=0.8pt,
		arc=0mm,
		left=3pt,
		right=3pt,
		top=3pt,
		bottom=3pt,
		fonttitle=\bfseries
	}
}

\begin{document}

\title{Plausible Deniability in Fully Homomorphic Computation}

\author{%
	Shahzad Ahmad~\orcidlink{0000-0002-9654-869X}\thanks{Shahzad Ahmad and Stefan Rass are with the LIT Secure and Correct Systems Lab, Johannes Kepler University, Linz, Austria and Zahra Seyedi is associated with Computer Science and Engineering Department, Koç University, Istanbul, Türkiye (email: shahzad.ahmad@jku.at, stefan.rass@jku.at, zseyedi@ku.edu.tr).}
	\and Stefan Rass~\orcidlink{0000-0003-2821-2489}
	\and Zahra Seyedi~\orcidlink{0009-0002-8492-4640}}

\maketitle

\begin{abstract}
	We introduce \emph{Plausible Deniability in Fully Homomorphic Computation} (PD-FHC), a framework enabling users to outsource Boolean computations to an untrusted  cloud while maintaining both computational privacy against  honest-but-curious providers and plausible deniability  against coercive adversaries. We define the notion of a  \emph{Deniable Computation Medium} (DCM) and a  \emph{Deniable Computation Scheme} (DCS) as medium-independent abstractions, then instantiate them 
	using RGB images with Fredkin-gate circuits. One real circuit and several decoys share a single fixed Fredkin-gate wiring. Embedded control bits decide what each gate computes at each pixel, so the same wiring evaluates the real function at the real positions and decoy functions elsewhere. The cloud applies this one wiring to every pixel identically, processing all circuits in a single pass. Under coercion, the user reveals a decoy with verifiable results while the real circuit stays hidden. We formalize multi-round coercion games with existence and circuit-discovery advantages. For the image instantiation we prove \emph{information-theoretic position privacy} under a \emph{matched-marginal condition}: when the real, decoy, and fill bits are drawn from a common per-position law and placed at random, the embedded LSB plane is exchangeable, so an honest-but-curious provider gains no advantage over guessing at locating the real positions, for any such law and not only the uniform one. We are explicit that this is a condition Alice enforces, that it is distinct from steganalytic undetectability, and that the latter requires the embedded law to match the declared service's legitimate-input law. Our Python implementation, benchmarked across circuit sizes (5--302 gates) and image dimensions ($128^2$ to $512^2$), shows competitive performance with TFHE for Boolean circuits while providing deniability that FHE cannot natively offer.
\end{abstract}

\begin{IEEEkeywords}
	plausible deniability, homomorphic computation, steganography, fredkin gate, cloud computing, privacy-preserving computation
\end{IEEEkeywords}

\section{Introduction}\label{sec:intro}
\IEEEPARstart{O}{utsourcing} computation to cloud providers exposes sensitive data to untrusted infrastructure. Fully 
Homomorphic Encryption (FHE)~\cite{gentry_fully_2009} and Trusted Execution Environments 
(TEEs)~\cite{costan_intel_2016} protect data confidentiality but fail to address a distinct threat: \emph{coercion}. A powerful adversary who compels a user to explain her cloud activity will find that FHE ciphertexts confirm the use of encryption, and TEE attestation logs confirm that 
a specific computation was executed. Neither technology allows the user to credibly deny her true computational intent. This paper introduces \emph{Plausible Deniability in Fully Homomorphic Computation} (PD-FHC), a framework enabling users to outsource Boolean computations while 
maintaining the ability to deny, under coercion, what they actually computed.

\subsection{Motivating Scenario}\label{sec:scenario}
A photo lab routinely submits large batches of images to a cloud service that applies a per-pixel filter as part of its dithering and quality-control pipeline. This activity is entirely ordinary, and the per-pixel changes are visually imperceptible at the level of single-bit adjustments in each channel. A technician needs to privately check whether a confidential 8-bit measurement $v$ exceeds a threshold $\tau$. She must also be able to deny that this specific computation ever occurred if she is later compelled to explain her cloud use. Neither FHE (conspicuous ciphertexts) nor a TEE (an undeniable attestation log) can provide this. With PD-FHC, she turns the routine job into a deniable computation.

The technician embeds the input bits for her comparison into the least significant bits (LSBs) of chosen pixels, and embeds the input bits for several decoy comparisons (a brightness check, a color-balance check, a noise-level check) at other pixels. The declared cloud filter is a single fixed wiring of universal Fredkin gates, applied to every pixel identically. The control bits embedded alongside each pixel's data decide what each gate does at that position, so the same wiring evaluates her real comparison at the real positions and a different (decoy) function at every decoy position. The cloud reads three LSBs, applies the gate, and writes three LSBs, and it does this for every pixel. We stress what the cloud does \emph{not} do here: it does not brighten, balance, or denoise the image in the ordinary sense. The only operation is the pixel-wise Fredkin pass, and ``brightness check'' or ``color-balance check'' name the \emph{functions some of the embedded circuits compute}, not image edits the cloud performs. If the technician is later coerced, she reveals one of the decoy circuits and its verifiable result. The coercer cannot tell it apart from the real one.

This example illustrates two requirements that existing secure computation techniques fail to satisfy 
simultaneously:

\begin{enumerate}
	\item \textbf{Existence hiding (against an external observer).} To anyone watching the wire but not running the job, the interaction must look like ordinary cloud storage or image filtering, with no sign that a private computation is taking place. FHE fails here: sending ciphertexts to a cloud is conspicuous, and a coercive adversary can demand decryption keys. We are precise about scope: existence is hidden from an external eavesdropper, but \emph{not} from the cloud provider, who necessarily knows that \emph{some} computation runs because the provider is the one running it (see Section~\ref{sec:entities}).
	
	\item \textbf{Partial circuit hiding (against the cloud and external observers).} Even an observer who knows a computation occurred should not learn \emph{which} function was computed. The wiring is visible to the cloud, since the cloud needs it to compute, but the meaning of each gate is fixed by the embedded control bits and is not revealed. The user can therefore present a different, verifiable function as the one she ran. TEEs fail here: attestation logs confirm exactly what was executed. We call this \emph{partial} hiding because the gate types are deniable while the wiring stays visible. Hiding the wiring as well (\emph{full circuit hiding}) is a natural next step that we leave to future work.
\end{enumerate}

No existing technique meets both requirements at once. FHE~\cite{gentry_fully_2009,brakerski_efficient_2011} hides data values, but its very use is conspicuous, and a coercer can compel key disclosure. TEEs~\cite{costan_intel_2016,kaplan_amd_2016} produce attestation logs that are undeniable proof of what ran, 
and they have repeatedly leaked through side channels~\cite{bulck_foreshadow_nodate,moghimi_copycat_nodate}. MPC~\cite{Yao1982ProtocolsFS,Goldreich1987HowTP} requires interaction among multiple parties and is itself a conspicuous activity. We note that MPC does carry a natural form of \emph{robustness} against coercion: a coercer who forces one party to reveal its share still learns nothing unless it convinces a threshold number of parties to cooperate. That robustness is not deniability, though. The participants cannot deny that the computation took place or substitute a different one. Deniable encryption~\cite{canetti_deniable_1997,oneill_bi-deniable_2011} addresses what a message decrypts to, not what a user computed, and deniable storage such as hidden volumes~\cite{truecrypt,mcdonald_stegfs_2000} hides stored data rather than dynamic computational intent. PD-FHC targets the gap these leave open: hiding both the existence and the intent of an outsourced computation.

\subsection{Design Requirements}\label{sec:requirements}
We identify four properties that any deniable outsourced computation scheme must satisfy: (1)~\emph{uniform processability}  the cloud applies identical operations to every data element; (2)~\emph{sufficient capacity} the cover medium has far more addressable positions than  secret bits ($n \gg L \cdot \ell$); (3)~\emph{distribution preservation} gate operations preserve the statistical profile of non-secret positions; and (4)~\emph{cover plausibility} transmitting the medium to a cloud provider has a legitimate, non-suspicious purpose. We formalise these in Section~\ref{sec:framework} as properties of a \emph{Deniable Computation Medium} (DCM) and present PD-FHC as both a general framework satisfying these requirements and a concrete image-based instantiation.

\subsection{Contributions}

\begin{enumerate}
	\item \textbf{Abstract framework.} We define the notion of a Deniable Computation Medium and a Deniable Computation Scheme, separating the security model from any specific cover medium (Section~\ref{sec:framework}).
	
	\item \textbf{Two-phase threat model with multi-round coercion.} We formalize a threat model separating computational privacy (against an honest-but-curious cloud) from plausible deniability (against a coercive adversary who may demand multiple revelations), and define novel coercion games with existence and circuit-discovery advantages (Section~\ref{sec:threat}).
	
	\item \textbf{Image-based instantiation.} We instantiate the framework using RGB images and 
	Fredkin-gate circuits, proving information-theoretic position privacy under a matched-marginal assumption and bounding the coercive adversary's advantage (Sections~\ref{sec:protocol}--\ref{sec:security}).
	
	\item \textbf{Implementation and evaluation.} We provide a Python implementation benchmarked across circuit sizes (5--302 gates) and image dimensions ($128^2$ to $512^2$), with quantitative comparison against TFHE for equivalent Boolean circuits (Section~\ref{sec:eval}).
\end{enumerate}

\subsection{Paper Organization}
Section~\ref{sec:related} surveys related work. Section~\ref{sec:prelim} covers Fredkin gates and LSB 
steganography. Section~\ref{sec:framework} defines the abstract framework. Section~\ref{sec:threat} formalizes the threat model. Section~\ref{sec:protocol} presents the image-based protocol. Section~\ref{sec:security} proves security. Section~\ref{sec:eval} evaluates the implementation. Section~\ref{sec:discussion} discusses limitations. Section~\ref{sec:conclusion} concludes.

\section{Related Work}\label{sec:related}

\subsection{Fully Homomorphic Encryption}
Gentry's construction~\cite{gentry_fully_2009} enabled arbitrary computation on encrypted data. Subsequent schemes~\cite{brakerski_efficient_2011,openfhe2022,seal} have improved efficiency, and modern libraries support both Boolean and arithmetic circuits. However, FHE provides no mechanism for 
plausible deniability. The use of FHE is itself conspicuous: transmitting ciphertexts to a cloud provider is not normal behavior, and a coercive adversary can demand decryption keys. PD-FHC addresses this by embedding computation within a legitimate interaction whose very existence is non-suspicious.

\subsection{Trusted Execution Environments}
Hardware-based solutions such as Intel SGX~\cite{costan_intel_2016}, and AMD SEV~\cite{kaplan_amd_2016} provide isolated execution. Despite strong integrity guarantees, TEEs require specialized hardware, have proven vulnerable to side-channel attacks~\cite{aciicmez_power_2006,brasser_software_nodate, bulck_foreshadow_nodate,moghimi_copycat_nodate}, and produce attestation logs that constitute undeniable proof of execution. PD-FHC is purely algorithmic and hardware-agnostic, producing no attestation trail.

\subsection{Steganography and Homomorphic Steganography}
Classical steganography~\cite{fridrich2009} conceals the existence of a message within a cover medium. Advanced techniques by Holub and Fridrich~\cite{holub_designing_2012} and Pevn\'{y} et al.~\cite{pevny_using_2010} achieve high undetectability but do not support computation on embedded 
data. ProSt~\cite{ahmad2026prost} introduced homomorphic steganography, enabling Fredkin-gate computation on LSB-embedded data with information-theoretic privacy against an honest-but-curious cloud. However, ProSt embeds a single computation and does not address coercion. PD-FHC extends ProSt with multi-location embedding and a formal deniability model, enabling users to provide alternative explanations under coercion.

\subsection{Deniable Cryptography and Storage}
Canetti et al.~\cite{canetti_deniable_1997} introduced deniable encryption, allowing senders to produce fake randomness that makes ciphertexts appear to encrypt different plaintexts. Subsequent work addressed receiver deniability~\cite{klonowski2008receiver}, and bi-deniability~\cite{oneill_bi-deniable_2011}. Deniable storage systems such as hidden volumes (e.g., 
VeraCrypt/TrueCrypt) and StegFS~\cite{mcdonald_stegfs_2000} apply deniability to file storage: hidden volumes are indistinguishable from random data in free space. PD-FHC generalizes deniability from static data (messages, files) to dynamic \emph{computational intent}: Alice denies not what she stored, but what she \emph{computed}.

\subsection{Secure Multi-Party Computation}
MPC protocols~\cite{Yao1982ProtocolsFS,Goldreich1987HowTP} enable joint computation with input privacy. However, MPC requires interactive protocols among multiple parties, assumes honest participation, and is itself conspicuous. It does offer a form of robustness against coercion, since a coercer who forces one party to reveal its share learns nothing until enough parties are compelled. That is robustness, not deniability: the parties cannot deny that the computation happened or claim a different one. PD-FHC operates in a non-interactive, single-cloud setting and provides deniability that MPC cannot.

\subsection{Positioning}
Table~\ref{tab:comparison} compares PD-FHC with related approaches. PD-FHC is the only scheme that simultaneously provides computational privacy, existence hiding, and partial circuit hiding.

\begin{table}[t]
	\centering
	\caption{Comparison with related work.}
	\label{tab:comparison}
	\small
	\begin{tabular}{lccccc}
		\toprule
		Approach & \rotatebox{70}{Comp.\ Privacy} 
		& \rotatebox{70}{Exist.\ Hiding} 
		& \rotatebox{70}{Part.\ Circ.\ Hiding} 
		& \rotatebox{70}{Overhead} 
		& \rotatebox{70}{HW-Dep.} \\
		\midrule
		FHE       & Yes & No  & No     & V.\ High & No  \\
		TEE       & Yes & No  & No     & Low      & Yes \\
		ProSt     & Yes & Yes & No     & Medium   & No  \\
		Den.\ Enc.& Yes & --- & Yes$^*$& Low      & No  \\
		MPC       & Yes & No  & No$^\dagger$ & High & No  \\
		\textbf{PD-FHC} 
		& \textbf{Yes} & \textbf{Yes} 
		& \textbf{Yes} & \textbf{Med.} 
		& \textbf{No} \\
		\bottomrule
	\end{tabular}

	\smallskip
	{\footnotesize $^*$Message deniability only, not which function was computed.\\
		$^\dagger$MPC has natural robustness against forced revelation of a single party's share, but does not let the parties deny that the computation occurred or substitute a different one.}
\end{table}

\section{Preliminaries}\label{sec:prelim}

\subsection{LSB Steganography}
An RGB image is represented as a tensor $I \in \mathbb{Z}_{256}^{h \times w \times 3}$, where 
$h, w$ are height and width, and $3$ denotes the RGB channels. We embed and extract single bits via the least significant bit:
\begin{align}
	\Emb(I, b, (r,c,k)) &: 
	I[r,c,k] \leftarrow (I[r,c,k] \;\&\; \neg 1) \,|\, b 
	\label{eq:emb} \\
	\Ext(I, (r,c,k)) &: 
	\text{return } I[r,c,k] \;\&\; 1 
	\label{eq:ext}
\end{align}
For an $h \times w \times 3$ image, there are $n = h \cdot w \cdot 3$ single-bit positions. 
Each position $(r,c,k)$ identifies a unique row, column, and color channel, and carries exactly one LSB.

\subsection{Notation}
We write $n$ for the total number of addressable bit-positions in a medium instance (for an image, the $n$ LSB positions just defined), $L$ for the number of embedded circuits (one real, $L - 1$ decoys), $\ell$ for the number of input bits per circuit, $j^*$ for the index of the real circuit (known only to Alice), and $t$ for the number of coercion rounds. Each circuit's input bits occupy a position set $\Lambda_j \subset \mathcal{P}$, where $\mathcal{P}$ is the set of all bit-positions, and the position sets are pairwise disjoint, $\Lambda_i \cap \Lambda_j = \emptyset$ for $i \neq j$. We use $L$ only for the \emph{count} of circuits and $\Lambda_j$ for the \emph{sets} of positions, and never overload one symbol for both. We write $\pi$ for the per-position bit law that governs every non-secret position: the fill bits Alice writes into unused positions are drawn i.i.d.\ from $\pi$, and the matched-marginal assumption of Section~\ref{sec:security} asks the real and decoy input bits to follow the same $\pi$. A function $\nu : \mathbb{N} \to \mathbb{R}^+$ is \emph{negligible} if for every $c > 0$ there exists $n_0$ such that $\nu(n) \leq n^{-c}$ for all $n \geq n_0$. We use this notion only for the abstract framework. For the image instantiation we are deliberately more concrete: the relevant advantages are either exactly zero under a stated matched-distribution assumption, or controlled by an explicit statistical distance, and we say so in each case rather than appealing to asymptotic negligibility.

\subsection{Function Specification and Circuits}\label{sec:funcspec}
The private computation Alice wants is a Boolean function $f : \{0,1\}^\ell \to \{0,1\}^*$, represented by a circuit $C$ composed exclusively of universal Fredkin gates. Any standard Boolean formula in terms of conjunctions, disjunctions, and negations is algorithmically convertible into a Fredkin-gate-only representation~\cite{ahmad2026prost}, in the same way a circuit is, without loss of generality, representable using only NAND or only NOR gates. Our choice of the Fredkin gate has technical reasons that become clear in Section~\ref{sec:fredkin}.

To deny the computation of $f$ against the curious cloud provider, Alice considers a family of \emph{decoy} circuits $C_1, \ldots, C_L$ that differ from the circuit for $f$ \emph{only} in the setting of the control wires of the Fredkin gates. A Fredkin gate with a different control setting becomes a different elementary operation (AND, OR, NOT, and so on). The $j$-th decoy circuit therefore has the \emph{same components and the same wiring} as the circuit for $f$, but different gate semantics, determined by that circuit's inputs. Those inputs are the bits embedded in the least significant bits of the image pixels. Processing the LSBs of the images in one parallel pass thus evaluates the circuit for $f$ and all decoy circuits at once. This shared-wiring construction is what later lets Alice deny $f$ by pointing to a decoy.

We write $C_{\mathsf{obf}}$ for the specification of the circuit that contains only the wiring and is silent about the meaning of each Fredkin gate in it. For the purpose of computation this is \emph{sufficient} for Carol to evaluate Alice's desired function, since each Fredkin gate gets its concrete meaning from the bits encoded in the LSB plane. Because many such meanings coexist in the same wiring besides the real one, the real circuit is only one among many decoys, which already provides a natural form of obfuscation at no extra cost (we return to this in Section~\ref{sec:cobf}). We leave the general problem of obfuscating circuits out of scope, in light of the impossibility results of~\cite{barak_impossibility_2012}, and restrict our use of the word obfuscation to the hiding of gate functionality implied by the steganographically hidden inputs.

\subsection{Fredkin Gate}\label{sec:fredkin}
The Fredkin gate~\cite{fredkin_conservative_2002} is a three-input, three-output reversible gate defined as:
\[
F(c, x, y) = 
\begin{cases}
	(c, x, y) & \text{if } c = 0 \\
	(c, y, x) & \text{if } c = 1
\end{cases}
\]
where $c$ is the control bit and $x, y$ are data bits.

\noindent\textbf{Why Fredkin gates.} Three properties make the Fredkin gate a good fit for PD-FHC.

\smallskip\noindent (1)~\emph{Bit-level compatibility.} Each gate operates on individual bits, directly compatible with LSB steganography where each position carries one bit.

\smallskip\noindent (2)~\emph{Programmability through the control bit.} The same gate computes different elementary operations depending on its control bit. This is exactly what lets one fixed wiring serve as both the real circuit and every decoy: the control bits, supplied through the LSB plane, select the per-gate semantics. The Fredkin gate is not the only member of a suitable gate family. The Toffoli gate is another universal reversible gate that would fit Definition~\ref{def:dcm}. We use Fredkin gates throughout because their conditional-swap action also preserves the bit count at each position, which we exploit in the security analysis (Lemma~\ref{lem:hamming}, Section~\ref{sec:security}).

\smallskip\noindent (3)~\emph{Universality.} Any Boolean function can be computed using Fredkin gates with ancillary constant bits:
\begin{align}
	\mathsf{NOT}(x) &: F(1, x, 0) = (1, 0, x) 
	\implies \neg x \text{ on wire 3} \label{eq:not} \\
	\mathsf{AND}(x,y) &: F(x, y, 0) = (x, \_, x \wedge y) 
	\implies x \wedge y \text{ on wire 3} \label{eq:and}
\end{align}
$\mathsf{OR}$ follows from De Morgan's law using three gates. For example, the Boolean expression 
$(A \wedge C) \vee (\neg A \wedge B) \vee (\neg B \wedge \neg C)$ decomposes into 14 Fredkin gates 
covering negations, conjunctions, De Morgan disjunctions, and the ancillary wire routing needed to fan out shared inputs.

\section{Framework for Deniable Outsourced Computation}\label{sec:framework}
We define deniable outsourced computation as a general framework, independent of any specific cover medium. This abstraction separates the security model from the instantiation, allowing us to state security guarantees that hold for any medium satisfying our requirements. The framework is not tied to images: uncompressed audio and network packet payloads also fit it, as we note in Section~\ref{sec:instantiations}. Section~\ref{sec:protocol} then instantiates the framework in full detail with RGB images.

\subsection{Deniable Computation Medium}
A deniable computation scheme requires a \emph{cover medium} that serves two purposes: it provides a large space of addressable bit-positions for embedding secret computations among indistinguishable fill positions, and it supplies a legitimate context for transmitting data to a cloud provider.

\begin{definition}[Deniable Computation Medium]\label{def:dcm}
	A \emph{Deniable Computation Medium} (DCM) is a tuple $\mathcal{M} = (\mathcal{D}, \mathcal{P}, n, 
	\Read, \Write, \mathcal{G})$ where:
	\begin{itemize}
		\item $\mathcal{D}$ is the data domain, i.e., the set of valid medium instances;
		\item $\mathcal{P}$ is a position space with $|\mathcal{P}| = n$, indexing the addressable 
		bit-positions within any instance $M \in \mathcal{D}$;
		\item $\Read : \mathcal{D} \times \mathcal{P} \to \{0,1\}$ extracts the bit at a given position;
		\item $\Write : \mathcal{D} \times \mathcal{P} \times \{0,1\} \to \mathcal{D}$ sets the bit at a given position, leaving all other positions unchanged;
		\item $\mathcal{G}$ is a gate family. Each gate $G \in \mathcal{G}$ is a function $G : \{0,1\}^{\kappa} \to \{0,1\}^{\kappa}$ for some fixed arity $\kappa$. A gate is applied \emph{across all positions in parallel} by evaluating $G$ at every position $p \in \mathcal{P}$: given $\kappa$ input instances $M_1, \ldots, M_{\kappa} \in \mathcal{D}$, the gate produces $\kappa$ output instances $M'_1, \ldots, M'_{\kappa}$ such that for each 
		$p \in \mathcal{P}$:
		\[
		\bigl(\Read(M'_1, p), \ldots, 
		\Read(M'_{\kappa}, p)\bigr) = 
		G\bigl(\Read(M_1, p), \ldots,\] 
		\[\Read(M_{\kappa}, p)\bigr).
		\]
	\end{itemize}
\end{definition}

\noindent The uniform application of gates is the mechanism that ensures the cloud provider treats every position identically: the same function $G$ is evaluated at every $p \in \mathcal{P}$, regardless of whether $p$ carries secret data, decoy data, or noise.

\subsection{Security Properties of a DCM}\label{sec:dcm-properties}
Not every medium is suitable for deniable computation. We require four properties.

\begin{definition}[Suitable DCM]\label{def:suitable-dcm}
	A DCM $\mathcal{M}$ is \emph{suitable for deniable computation} if it satisfies the following properties:
\end{definition}

\smallskip\noindent\textbf{Property~1 (Uniform Processability).} For every gate $G \in \mathcal{G}$ and every set of input instances $M_1, \ldots, M_{\kappa}$, the uniform application of $G$ transforms each position $p \in \mathcal{P}$ independently and identically. The computational cost and memory-access pattern of evaluating $G$ are identical for every position, preventing the cloud 
provider from inferring which positions carry secret data through timing or cache side-channels.

\smallskip\noindent\textbf{Property~2 (Sufficient Capacity).} The number of addressable positions $n$ satisfies $n \gg L \cdot \ell$, where $L$ is the number of embedded embedded circuits and $\ell$ is the number of bit-positions required per circuit. This ensures that secret positions constitute a negligible fraction of the total, making them difficult to locate.

\smallskip\noindent\textbf{Property~3 (Distribution Preservation).} Let $\pi$ denote a distribution over $\{0,1\}$. The choice of $\pi$ is not fixed by the framework. What matters is that one common $\pi$ governs every position that does not carry the real circuit's data, that is, the decoy positions and the fill positions alike. If all such non-real positions in the input instances carry i.i.d.\ bits drawn from $\pi$, then after applying any sequence of gates $G_1, \ldots, G_m \in \mathcal{G}$, the bits at those positions in the output instances remain distributed according to $\pi$:
\[
\Read(M''_i, p) \sim \pi \quad \text{for all non-secret } p \] 
\[\text{ and all output instances } M''_i.
\]
This prevents the provider from identifying positions where ``structured'' computation occurred by comparing input and output distributions.

\smallskip\noindent\textbf{Property~4 (Cover Plausibility).} There exists a legitimate, non-suspicious use case for transmitting instances of $\mathcal{D}$ to a cloud provider for processing and receiving processed instances in return. The gate specifications in $\mathcal{G}$ must be consistent with this use case, so that the cloud provider's view of the protocol is indistinguishable from a routine service interaction.

\smallskip
\noindent\textbf{Remark.} Properties~1--3 are formal and verifiable for a given instantiation. Property~4 is inherently contextual: it depends on the operational setting and cannot be captured by a cryptographic definition. We discuss how specific instantiations satisfy Property~4 qualitatively in Section~\ref{sec:protocol}.

\subsection{Deniable Computation Scheme}
We now define the algorithmic structure of a deniable computation over a DCM.

\begin{definition}[Deniable Computation Scheme]\label{def:scheme}
	A \emph{Deniable Computation Scheme} (DCS) over a suitable DCM $\mathcal{M}$ is a tuple of algorithms $\Pi = (\LocGen, \Emb, \Comp, \Ext)$:
	\begin{itemize}
		\item $\{\Lambda_1, \ldots, \Lambda_L\} \leftarrow \LocGen(\mathcal{P}, L, \ell)$: Samples $L$ pairwise disjoint subsets of $\mathcal{P}$, each of size $\ell$, uniformly at random ($\Lambda_i \cap \Lambda_j = \emptyset$ for $i \neq j$). These position sets are known only to Alice.
		
		\item $\mathbf{M}' \leftarrow \Emb(\mathbf{M}, \{(\Lambda_j, x_j)\}_{j=1}^{L})$: For each circuit $j$, writes input bits $x_j \in \{0,1\}^{\ell}$ at positions $\Lambda_j$. All positions $p \notin \bigcup_{j} \Lambda_j$ are filled with i.i.d.\ bits from $\pi$.
		
		\item $\mathbf{M}'' \leftarrow \Comp(C, \mathbf{M}')$: The cloud provider evaluates circuit $C = (G_1, \ldots, G_m)$ by applying each gate $G_i$ uniformly to all positions in $\mathcal{P}$. The provider receives only $C$ and $\mathbf{M}'$, with no knowledge of position sets or circuit assignments.
		
		\item $y_j \leftarrow \Ext(\mathbf{M}'', \Lambda_j)$: Alice reads output bits for circuit $j$ from positions $\Lambda_j$ in the computed instances.
	\end{itemize}
\end{definition}

\noindent\textbf{Correctness.} A DCS is correct if for every circuit $j \in \{1, \ldots, L\}$, the extracted output $y_j = \Ext(\mathbf{M}'', \Lambda_j)$ equals $f_j(x_j)$, where $f_j$ is the Boolean function computed by circuit $C$ when evaluated on the bits at positions $\Lambda_j$. The bits read from $\Lambda_j$ set the control wires of the gates, so they decide what each gate computes. This is what makes the per-circuit functions $f_j$ distinct from one another under one shared wiring, and it is what makes the decoys usable as alternative explanations.

\smallskip
\noindent\textbf{Uniform processing as structural invariant.} The critical constraint is that $\Comp$ applies every gate to \emph{every} position in $\mathcal{P}$. Combined with Property~3, this ensures 
that the output medium $\mathbf{M}''$ retains the same indistinguishability guarantees as $\mathbf{M}'$: the provider cannot distinguish real from fill positions by observing the computation.

\subsection{Possible Instantiations}\label{sec:instantiations}
The DCM framework is not limited to images. Any medium with high-capacity bit-addressable positions and a legitimate cloud-processing use case is a candidate: uncompressed audio (LSBs in 16-bit samples, cloud audio normalization), or network packet payloads (cloud-based traffic processing). We focus on RGB images (Section~\ref{sec:protocol}), where $\mathcal{P} = \{(r,c,k)\}$ indexes pixel-channel LSBs, the gate family is Fredkin gates ($\kappa = 3$), and cover plausibility follows from the prevalence of cloud image processing services.

\section{Threat Model and Security Definitions}\label{sec:threat}
We formalize the adversarial setting in which a Deniable Computation Scheme (Definition~\ref{def:scheme}) operates. Our threat model separates two adversaries with distinct 
capabilities and goals, reflecting realistic settings where the entity executing a computation differs from the entity demanding explanations.

\subsection{Entities}\label{sec:entities}
We distinguish two kinds of attacker: external parties and an honest-but-curious cloud provider. By \emph{existence hiding} we mean that to an external eavesdropper the cloud looks like a mere storage or filtering service, while it is in fact also performing a computation. Against the cloud provider, existence is \emph{not} hidden, because the provider is the one running the computation and necessarily knows that some computation occurs. What stays hidden from the cloud is the exact data being processed and which function among the family is being evaluated. Table~\ref{tab:adversaries} summarises this split, which the rest of the section makes precise.

\begin{table}[t]
	\centering
	\caption{What each adversary can and cannot tell. Existence hiding is meaningful only against an external observer. The cloud always knows a computation runs.}
	\label{tab:adversaries}
	\small
	\resizebox{\columnwidth}{!}{
	\begin{tabular}{llcc}
		\toprule
		Adversary & Type & Existence hidden? & Function hidden? \\
		\midrule
		External       & Eavesdropper        & Yes & Yes (deniable) \\
		Cloud provider & Honest-but-curious  & No  & Yes \\
		\bottomrule
	\end{tabular}
}
\end{table}

\smallskip\noindent\textbf{Alice (Client).} Alice possesses private inputs and wishes to outsource a Boolean computation to a cloud provider. She prepares $L$ embedded circuits (one real at secret index $j^*$, and $L - 1$ decoys) using a DCS $\Pi$ over a DCM $\mathcal{M}$. Alice knows all position sets $\{\Lambda_1, \ldots, \Lambda_L\}$, the index $j^*$, all circuit descriptions, and all inputs and outputs for 
every circuit.

\smallskip\noindent\textbf{Carol (Cloud Provider).} An honest-but-curious (HBC) adversary. Carol receives embedded medium instances $\mathbf{M}'$ and the wiring specification $C_{\mathsf{obf}}$ (Section~\ref{sec:funcspec}), and executes the protocol faithfully. She may attempt to infer secret information from her observations. Carol knows the DCS algorithms, the gate family $\mathcal{G}$, and all public parameters. She does \emph{not} know: any position sets $\Lambda_j$, the number of circuits $L$, the true index $j^*$, or which 
positions carry secret data versus fill. We grant Carol the additional power of a \emph{steganalytic warden}: she may run any detector (RS or sample-pair analysis~\cite{fridrich2009}, or a learned detector~\cite{boroumand2019srnet}) on $\mathbf{M}'$ and $\mathbf{M}''$ to test whether they deviate from the distribution she expects a legitimate input to her service to have. This capability is the subject of Section~\ref{sec:steganalysis}.

\smallskip\noindent\textbf{Eve (External Coercive Adversary).} A computationally unbounded external adversary who can compel Alice to reveal information. Eve models legal authorities, state actors, or any entity with the power to demand explanations under threat of consequences. Eve can observe the complete communication transcript between Alice and Carol, $V = (\mathbf{M}', C_{\mathsf{obf}}, \mathbf{M}'')$, where $\mathbf{M}'$ is the embedded input medium that Alice sends and $\mathbf{M}''$ is the processed medium that Carol returns. 
Under coercion, Eve can demand that Alice reveal embedded circuits, including their position sets, 
circuit descriptions, inputs, and outputs.

\emph{Critically, Eve cannot force Alice to reveal information that Alice denies possessing.} Eve does not learn the exact number $L$ of embedded circuits from the transcript, though she can upper-bound it from the carrier capacity (Section~\ref{sec:game-discussion}). After Alice reveals $t$ circuits and claims no further circuits exist, Eve cannot prove this false, because the unrevealed real-and-decoy positions are exchangeable with the surrounding fill positions and so statistically indistinguishable from them (proved in Theorem~\ref{thm:existence}). This is analogous to hidden-volume deniability in deniable 
storage systems~\cite{truecrypt}: an adversary who demands all volumes cannot prove that a hidden volume exists if the free space is filled with random data.

\smallskip\noindent\textbf{Active attackers are out of scope.} Both guarantees fail against an \emph{active} attacker. An active cloud provider could refuse to evaluate the circuit, or evaluate a different one. An active external attacker could apply ordinary image manipulations that overwrite the LSB plane and so destroy the embedded circuit before it is computed. We treat only honest-but-curious processing and passive observation, and we return to active deviation and collusion in Section~\ref{sec:discussion}.

\subsection{Adversary Separation}\label{sec:separation}
The security model relies on a separation between Carol and Eve:

\smallskip\noindent\emph{Temporal.} Carol processes data during computation. Eve coerces Alice after computation completes (or at a later time).

\smallskip\noindent\emph{Capability.} Carol has infrastructure access (she processes the medium) but no coercion power. Eve has coercion power but no direct access to Carol's infrastructure or intermediate computation states.

\smallskip\noindent\emph{Non-collusion.} Carol and Eve do not share information or coordinate. We discuss collusion attacks and mitigations in Section~\ref{sec:discussion}.

\subsection{Security Game: Computational Privacy}\label{sec:game-privacy}
We formalize what it means for the cloud provider to learn nothing about secret data. The game is defined over the abstract DCM (Definition~\ref{def:dcm}), not any specific medium.

\begin{definition}[Computational Privacy Game]\label{def:privacy-game}
	Let $\Pi$ be a DCS over a DCM $\mathcal{M}$ with $n = |\mathcal{P}|$ positions. The game 
	$\mathsf{Game}_{\mathsf{priv}}^{\mathcal{A}}(n, L, \ell)$ between a challenger and adversary $\mathcal{A}$ (modeling Carol) proceeds as follows:
	\begin{enumerate}
		\item \textbf{Setup.} The challenger generates $L$ disjoint position sets 
		$\{\Lambda_1, \ldots, \Lambda_L\} \leftarrow \LocGen(\mathcal{P}, L, \ell)$ and embeds $L$ 
		circuits with inputs $\{x_1, \ldots, x_L\}$, producing $\mathbf{M}' \leftarrow 
		\Emb(\mathbf{M}, \{(\Lambda_j, x_j)\}_{j=1}^{L})$. All non-secret positions carry i.i.d.\ bits from $\pi$.
		
		\item \textbf{Computation.} The challenger applies the wiring specification $C_{\mathsf{obf}}$ uniformly, producing $\mathbf{M}''$.
		
		\item \textbf{Challenge.} $\mathcal{A}$ receives $(\mathbf{M}', C_{\mathsf{obf}}, \mathbf{M}'')$ and outputs a position $p^* \in \mathcal{P}$.
		
		\item \textbf{Win condition.} $\mathcal{A}$ wins if $p^* \in \bigcup_{j=1}^{L} \Lambda_j$.
	\end{enumerate}
\end{definition}

\begin{definition}[Privacy Advantage]\label{def:priv-advantage}
	The adversary's advantage is:
	\[
	\Adv_{\mathsf{priv}}^{\mathcal{A}}(n, L, \ell) = \Pr[\mathcal{A}\ \text{wins}] - \frac{L \cdot \ell}{n}\]
	where $L \cdot \ell / n$ is the probability of hitting a secret position by random guessing. $\Pi$ provides \emph{computational privacy} if this advantage is negligible in~$n$ for all efficient adversaries $\mathcal{A}$. The quantity can be negative if $\mathcal{A}$ does worse than guessing. What we bound throughout is the advantage from above, and a value at or below $0$ means no useful attack.
\end{definition}

\noindent\textbf{Remark.} The game captures Carol's strongest attack: identify \emph{any} secret position. A weaker variant (guess the secret \emph{value} at a known position) is trivially subsumed.

\subsection{Security Game: Plausible Deniability}\label{sec:game-deniability}
We model coercion as an interactive, multi-round game where the adversary may demand repeated explanations, reflecting realistic coercion settings.

\begin{definition}[Multi-Round Coercion Game]\label{def:coercion-game}
	Let $\Pi$ be a DCS over a DCM $\mathcal{M}$ with $n = |\mathcal{P}|$. The game 
	$\mathsf{Game}_{\mathsf{deny}}^{\mathcal{A}}(n, L, t)$, parameterized by $L \geq 2$ circuits and $t < L$ coercion rounds, proceeds as follows:
	\begin{enumerate}
		\item \textbf{Setup.} Alice prepares $L$ circuits $\{s_1, \ldots, s_L\}$ where 
		$s_j = (C_j, x_j, \Lambda_j, y_j)$ consists of a circuit, inputs, position set, and correct output 
		$y_j = f_j(x_j)$. One circuit $s_{j^*}$ is Alice's real computation. All $L$ circuits share one fixed Fredkin-gate wiring and differ only in the control bits that set each gate's operation (Section~\ref{sec:funcspec}), so they have identical gate count, depth, and topology by construction, with no dummy-gate padding. Position sets are pairwise disjoint and uniformly random.
		
		\item \textbf{Embedding \& Computation.} Alice embeds all circuits and transmits the medium to Carol. Carol processes it uniformly, producing $\mathbf{M}''$. The adversary $\mathcal{A}$ (modeling Eve) observes the full transcript $V = (\mathbf{M}', C_{\mathsf{obf}}, \mathbf{M}'')$.
		
		\item \textbf{Coercion rounds.} For $i = 1, \ldots, t$:
		\begin{enumerate}
			\item $\mathcal{A}$ demands that Alice reveal a circuit.
			\item Alice reveals a decoy $s_{j_i} = (C_{j_i}, x_{j_i}, \Lambda_{j_i}, y_{j_i})$ 
			where $j_i \neq j^*$ and $j_i \notin \{j_1, \ldots, j_{i-1}\}$ (a fresh, previously 
			unrevealed decoy).
			\item $\mathcal{A}$ verifies: reads output bits from $\mathbf{M}''$ at positions $\Lambda_{j_i}$ and checks $y_{j_i} = f_{j_i}(x_{j_i})$.
		\end{enumerate}
		
		\item \textbf{Claim.} After $t$ rounds, Alice claims no further circuits exist.
		
		\item \textbf{Adversary output.} $\mathcal{A}$ outputs a bit $b' \in \{0,1\}$ (for existence distinguishing: $b' = 1$ if $\mathcal{A}$ believes unrevealed circuits remain) or an index $j'$ (for circuit discovery: $\mathcal{A}$'s guess of~$j^*$).
	\end{enumerate}
\end{definition}

\begin{definition}[Deniability Advantages]\label{def:deny-advantage}
	We define two advantages capturing distinct threats:
	
	\smallskip\noindent\emph{Existence distinguishing.} Consider two sub-experiments. In $\mathsf{Exp}_0$, Alice embeds exactly $t$ circuits (all of which are revealed). In $\mathsf{Exp}_1$, Alice embeds $t < L$ circuits and reveals $t$ of them (hiding $L - t$, including the real one). In both experiments, all non-revealed, non-secret positions are i.i.d.\ from $\pi$. Here $\mathcal{A}(\mathsf{Exp}_i)$ denotes the bit the adversary outputs when it faces experiment $i$, that is, its guess of which of the two situations actually occurred. The existence advantage is:
	\begin{equation}
		\Adv_{\mathsf{exist}}^{\mathcal{A}}(n, L, t) = \bigl|\Pr[\mathcal{A}(\mathsf{Exp}_1) = 1] 
		- \Pr[\mathcal{A}(\mathsf{Exp}_0) = 1]\bigr|
		\label{eq:adv-exist}
	\end{equation}
	
	\smallskip\noindent\emph{Circuit discovery.} Assuming $\mathcal{A}$ knows $L$ (a strictly stronger adversary), the circuit-discovery advantage measures the ability to identify $j^*$ among the $L - t$ unrevealed circuits:
	\begin{equation}
		\Adv_{\mathsf{intent}}^{\mathcal{A}}(n, L, t) = \Pr[j' = j^*] - \frac{1}{L - t}
		\label{eq:adv-intent}
	\end{equation}
	
	\noindent For the abstract framework we say $\Pi$ provides \emph{plausible deniability} if both advantages vanish as the relevant single-position advantage $\varepsilon(n)$ does. We are deliberately concrete for the image instantiation: there the existence advantage is exactly $0$ under the matched-marginal condition (Theorem~\ref{thm:existence}) and otherwise bounded by the statistical distance $\Delta$ of Section~\ref{sec:steganalysis}, while the circuit-discovery advantage is bounded as in Theorem~\ref{thm:intent}. We do not claim asymptotic negligibility, since none of these quantities is negligible in the cryptographic sense. They are either zero under a stated condition or controlled by an explicit, measurable distance.
\end{definition}

\subsection{Discussion of the Coercion Model}\label{sec:game-discussion}
\noindent\textbf{Two dimensions.} The game captures two distinct threats. \emph{Existence distinguishing} (Eq.~\ref{eq:adv-exist}) asks whether Eve can tell a world where Alice hides circuits from one where she has revealed them all, the direct analogue of hidden-volume detection in deniable storage. \emph{Circuit discovery} (Eq.~\ref{eq:adv-intent}) asks whether an Eve who somehow knows $L$ can pick out $j^*$ among the unrevealed circuits. Existence hiding is the stronger property: an Eve who cannot detect hidden circuits cannot identify $j^*$ among them either. We use the term \emph{partial circuit hiding} for the second guarantee, because the wiring is visible while only the gate semantics are concealed.

\smallskip\noindent\textbf{Why $t < L$ is necessary.} If $t \geq L$, Alice runs out of decoys and must either reveal the real computation or be caught in an inconsistency. This is inherent to any scheme that fixes its decoys in advance: Alice commits to a finite set of $L$ circuits at embedding time and cannot manufacture new consistent decoys once coercion starts. Alice's security parameter is $L - t$, the number of unrevealed circuits after coercion. Increasing $L$ provides more ``budget'' for deniability at the cost of additional embedding (which is cheap, since $L \cdot \ell \ll n$ for practical parameters).

\smallskip\noindent\textbf{What Eve can and cannot infer about $L$.} Eve does not learn the exact number $L$ of embedded circuits from the transcript. We are careful not to overstate this. From the carrier capacity $n$ and the per-circuit size $\ell$, Eve can compute an upper bound $L \le n/\ell$ on how many circuits could fit, and a large image leaves room for many. So Eve may reasonably suspect that more circuits exist and press Alice for further revelations. What she cannot do is \emph{prove} that any specific unrevealed circuit exists: after Alice reveals $t$ circuits and stops, the remaining $n - t\cdot\ell$ positions split into $(L-t)\cdot\ell$ unrevealed real-and-decoy positions and $n - L\cdot\ell$ fill positions, and under the matched-marginal condition (Property~3) these two classes are exchangeable and so statistically indistinguishable. Theorem~\ref{thm:existence} makes this precise. The residual pressure Eve can apply is a danger to Alice in practice, not a break of the formal guarantee.

\smallskip\noindent\textbf{Verification does not break deniability.} In each coercion round, Eve verifies that the revealed circuit is consistent: the output at the revealed positions matches the claimed circuit and inputs. This verification \emph{always succeeds} for both real and decoy circuits (by correctness of the DCS), so passing verification provides Eve with no distinguishing 
information.

\smallskip\noindent\textbf{Relationship to prior models.} Our existence advantage 
$\Adv_{\mathsf{exist}}$ is analogous to the hidden-volume detection advantage in deniable storage~\cite{canetti_deniable_1997,mcdonald_stegfs_2000}. Our circuit-discovery advantage 
$\Adv_{\mathsf{intent}}$ is analogous to the indistinguishability advantage in deniable encryption~\cite{canetti_deniable_1997}. PD-FHC combines both in a single framework for the computational (rather than storage or communication) setting.

\section{PD-FHC: Image-Based Instantiation}\label{sec:protocol}
We instantiate the abstract framework of Section~\ref{sec:framework} using RGB images as the cover 
medium and Fredkin gates as the gate family.

\subsection{Image-Based DCM}\label{sec:image-dcm}
We define the image-based DCM $\mathcal{M}_{\mathsf{img}}$ as an instance of Definition~\ref{def:dcm}:

\smallskip\noindent
\begin{tabularx}{\columnwidth}{@{}lX@{}}
	$\mathcal{D}$ &
	$= \mathbb{Z}_{256}^{h \times w \times 3}$ (RGB images with pixel values in $\{0,\ldots,255\}$) \\
	
	$\mathcal{P}$ &
	$= \{(r,c,k): 0 \le r < h,\; 0 \le c < w,\; 0 \le k < 3\}$ \\
	
	$n$ &
	$= h \cdot w \cdot 3$ (total LSB positions) \\
	
	$\Read(I,(r,c,k))$ &
	$= I[r,c,k] \,\&\, 1$ (extract LSB) \\
	
	$\Write(I,(r,c,k),b)$ &
	$= I[r,c,k] \leftarrow (I[r,c,k] \,\&\, \neg 1)\,|\,b$
	(set LSB) \\
	
	$\mathcal{G}$ &
	$= \{F\}$ where $F:\{0,1\}^3 \to \{0,1\}^3$
	is the Fredkin gate ($\kappa=3$)
\end{tabularx}
\smallskip\noindent We verify that $\mathcal{M}_{\mathsf{img}}$ satisfies the four properties of a suitable DCM (Definition~\ref{def:suitable-dcm}):

\smallskip\noindent\emph{Property~1 (Uniform Processability).} Each Fredkin gate reads one bit from 
each of three input images at position $p$ and writes one bit to each of three output images at the same position $p$. This is applied identically for every $p \in \mathcal{P}$, with identical computation and memory access per position.

\smallskip\noindent\emph{Property~2 (Sufficient Capacity).} Even a modest $128 \times 128$ image gives 
$n = 49{,}152$, supporting hundreds of circuits with $\ell = 16$ bits each while keeping $L \cdot \ell / n < 1\%$.

\smallskip\noindent\emph{Property~3 (Distribution Preservation).} By Lemma~\ref{lem:hamming} (Section~\ref{sec:security}) the Fredkin gate only permutes the three bits at a position, so it preserves whatever per-position law those bits follow. If the decoy and fill positions carry i.i.d.\ bits from a common law $\pi$, they still follow $\pi$ after any number of Fredkin evaluations. We stress what this does \emph{not} require. It does \emph{not} require $\pi = \mathrm{Ber}(1/2)$, and it does \emph{not} make any distributional assumption about Alice's real input data, which we must not constrain. The property we actually use is that the decoy and fill positions are drawn from the \emph{same} law as the real positions. Section~\ref{sec:security} makes this matched-marginal condition precise and shows it is what Alice controls.

\subsubsection{From Abstraction to Pixels: a Worked Example}\label{sec:worked}
To connect the abstract DCM (Definition~\ref{def:dcm}) to concrete pixels, consider a $2 \times 2$ RGB tile and the threshold function (a simple Boolean comparison) ``$x_1 \wedge x_2$'' with $\ell = 2$ input bits. The 
position space is $\mathcal{P} = \{(r,c,k)\}$ with $n = 2 \cdot 2 \cdot 3 = 
12$ LSB positions. Suppose $\LocGen$ selects $\Lambda_1 = \{(0,0,0),(1,1,2)\}$ for the real circuit and 
$\Lambda_2 = \{(0,1,1),(1,0,0)\}$ for a decoy. Embedding writes $x_1, x_2$ into the LSBs at $\Lambda_1$ via 
$\Write$, e.g.\ a red value $200 = 11001000_2$ carrying bit $1$ becomes $11001001_2 = 201$, a change 
invisible to the eye. The decoy input bits are written at $\Lambda_2$ in the same way, and every remaining LSB (outside $\Lambda_1$ and $\Lambda_2$) is a fresh CSPRNG bit. Carol then applies the Fredkin circuit to \emph{all 12 positions} identically: she has no way to tell that $(0,0,0)$ holds a real input bit while $(0,1,0)$ holds a fill bit, because both are single LSBs undergoing the same gate. Alice recovers the result by reading the LSBs at $\Lambda_1$ from the returned tile. Figure~\ref{fig:worked} draws this tile concretely, and Figure~\ref{fig:pipeline} shows the full pipeline. The key point to take away is that the abstract $\Read/\Write/\mathcal{G}$ of Definition~\ref{def:dcm} are nothing more than LSB extraction, LSB insertion, and pixel-wise Fredkin evaluation.

\begin{figure*}[t]
	\centering
	\begin{tikzpicture}[
		font=\small,
		>={Stealth[length=2.2mm]},
		cell/.style={draw, minimum width=0.9cm, minimum height=0.5cm,
			inner sep=0pt, font=\footnotesize},
		bitcell/.style={draw, minimum width=0.42cm, minimum height=0.42cm,
			inner sep=0pt, font=\scriptsize},
		tlabel/.style={font=\scriptsize\bfseries},
		plabel/.style={font=\tiny, text=black!60},
		]
		\colorlet{realf}{blue!14}\colorlet{realb}{blue!55!black}
		\colorlet{decoyf}{orange!24}\colorlet{decoyb}{orange!72!black}
		\colorlet{fillf}{black!6}\colorlet{fillb}{black!45}
		
		\begin{scope}[shift={(0,0)}]
			\node[tlabel] at (0.45,0.78) {Alice embeds (roles private)};
			\foreach \px/\py/\ch/\fc/\bc/\bit/\tag in {
				0/0/0/realf/realb/1/x_1, 0/0/1/fillf/fillb/0/{}, 0/0/2/fillf/fillb/1/{},
				1/0/0/fillf/fillb/0/{}, 1/0/1/decoyf/decoyb/1/d_1, 1/0/2/fillf/fillb/0/{},
				0/1/0/decoyf/decoyb/0/d_2, 0/1/1/fillf/fillb/1/{}, 0/1/2/fillf/fillb/1/{},
				1/1/0/fillf/fillb/0/{}, 1/1/1/fillf/fillb/0/{}, 1/1/2/realf/realb/1/x_2}
			{
				\pgfmathsetmacro{\xc}{\px*1.15+0.45}
				\pgfmathsetmacro{\yc}{-(\py*1.85+\ch*0.5+0.25)}
				\node[cell, fill=\fc, draw=\bc] at (\xc,\yc) {$\bit$\,{\tiny$\tag$}};
			}
			\foreach \px/\py in {0/0,1/0,0/1,1/1}{
				\pgfmathsetmacro{\xc}{\px*1.15+0.45}
				\pgfmathsetmacro{\yc}{-(\py*1.85)+0.14}
				\node[plabel] at (\xc,\yc) {$(\py,\px)$};
			}
			\node[plabel] at (-0.30,-0.25) {R};
			\node[plabel] at (-0.30,-0.75) {G};
			\node[plabel] at (-0.30,-1.25) {B};
			\node[plabel] at (-0.30,-2.10) {R};
			\node[plabel] at (-0.30,-2.60) {G};
			\node[plabel] at (-0.30,-3.10) {B};
		\end{scope}
		
		\begin{scope}[shift={(3.05,0)}]
			\node[tlabel] at (0.45,0.78) {Carol receives (blind)};
			\foreach \px/\py/\ch/\bit in {
				0/0/0/1,0/0/1/0,0/0/2/1, 1/0/0/0,1/0/1/1,1/0/2/0,
				0/1/0/0,0/1/1/1,0/1/2/1, 1/1/0/0,1/1/1/0,1/1/2/1}
			{
				\pgfmathsetmacro{\xc}{\px*1.15+0.45}
				\pgfmathsetmacro{\yc}{-(\py*1.85+\ch*0.5+0.25)}
				\node[cell, fill=black!6, draw=black!45] at (\xc,\yc) {$\bit$};
			}
			\foreach \px/\py in {0/0,1/0,0/1,1/1}{
				\pgfmathsetmacro{\xc}{\px*1.15+0.45}
				\pgfmathsetmacro{\yc}{-(\py*1.85)+0.14}
				\node[plabel] at (\xc,\yc) {$(\py,\px)$};}
		\end{scope}
		
		\node[font=\scriptsize, text=black!70] at (2.62,-1.30) {$\equiv$};
		\node[align=center, font=\scriptsize\itshape, text=black!70] at (2.62,-1.75) {same\\bits};
		\node[align=center, font=\scriptsize\itshape, text=black!62] at (2.62,-3.95) {coloring is\\Alice's secret};
		
		\begin{scope}[shift={(6.05,0.15)}]
			\node[tlabel, anchor=west] at (0,0.63) {$\Write$ at $(0,0,0)$: only the LSB flips};
			\foreach \i/\b/\c in {0/1/black!4,1/1/black!4,2/0/black!4,3/0/black!4,4/1/black!4,5/0/black!4,6/0/black!4,7/0/blue!16}{
				\pgfmathsetmacro{\xx}{\i*0.44}
				\node[bitcell, fill=\c] at (\xx,0) {$\b$};}
			\node[anchor=west] at (3.5,0) {$=200$};
			\foreach \i/\b/\c in {0/1/black!4,1/1/black!4,2/0/black!4,3/0/black!4,4/1/black!4,5/0/black!4,6/0/black!4,7/1/blue!16}{
				\pgfmathsetmacro{\xx}{\i*0.44}
				\node[bitcell, fill=\c] at (\xx,-0.75) {$\b$};}
			\node[anchor=west] at (3.5,-0.62) {$=201$};
			\draw[->] (3.08,-0.20) -- node[right, font=\tiny]{write $x_1{=}1$} (3.08,-0.55);
			\node[font=\tiny, text=black!62, anchor=west] at (0,-1.5) {$\pm1$ per channel, visually imperceptible};
		\end{scope}
		
		\begin{scope}[shift={(6.05,-2.15)}]
			\node[cell, fill=realf, draw=realb, minimum width=0.5cm, minimum height=0.34cm] at (0,0) {};
			\node[anchor=west, font=\scriptsize] at (0.32,0) {real $\Lambda_1$};
			\node[cell, fill=decoyf, draw=decoyb, minimum width=0.5cm, minimum height=0.34cm] at (1.95,0) {};
			\node[anchor=west, font=\scriptsize] at (2.27,0) {decoy $\Lambda_2$};
			\node[cell, fill=fillf, draw=fillb, minimum width=0.5cm, minimum height=0.34cm] at (0,-0.5) {};
			\node[anchor=west, font=\scriptsize] at (0.32,-0.5) {fill (CSPRNG)};
		\end{scope}
		
		\draw[black!15] (-0.5,-4.55) -- (9.9,-4.55);
		\node[anchor=west, font=\small] at (-0.5,-4.98)
		{$\mathbf{M}'\ \xrightarrow{\ \text{Carol: one Fredkin pass over \emph{every} position }p\ }\ \mathbf{M}''\ \xrightarrow{\ \text{Alice reads }\Lambda_1\ }\ y=x_1\wedge x_2$};
	\end{tikzpicture}
	\caption{Concrete instantiation of the worked example on a single $2\times2$ RGB tile ($n=12$ LSB positions; each pixel holds channels R,G,B top-to-bottom, indexed $(r,c,k)$). \emph{Left:} Alice embeds the real inputs $x_1,x_2$ at $\Lambda_1=\{(0,0,0),(1,1,2)\}$ and decoy inputs $d_1,d_2$ at $\Lambda_2=\{(0,1,1),(1,0,0)\}$; every remaining LSB is CSPRNG fill. \emph{Top right:} embedding writes one bit into one LSB, e.g.\ $200\to201$, a $\pm1$ change per channel that is visually imperceptible. \emph{Right:} Carol receives the identical bit-string with no role information, so real, decoy, and fill positions are indistinguishable to her; the colouring lives only in Alice's key material. Carol applies one Fredkin pass to every position, returns $\mathbf{M}''$, and Alice reads $\Lambda_1$ to recover $y=x_1\wedge x_2$.}
	\label{fig:worked}
\end{figure*}

\begin{figure*}
	\centering
	\begin{tikzpicture}[
		font=\small,
		box/.style={draw, rounded corners, align=center, 
			minimum height=10mm, minimum width=20mm, inner sep=3pt},
		lab/.style={align=center, font=\scriptsize},
		>={Stealth[length=2mm]}
		]
		\node[box] (cover) {Cover\\images};
		\node[box, right=10mm of cover] (emb) 
		{Embed\\$L$ circuits\\at $\{\Lambda_j\}$};
		\node[box, right=10mm of emb] (mp) {$\mathbf{M}'$};
		\node[box, right=10mm of mp] (comp) 
		{Cloud (Carol)\\uniform\\Fredkin pass};
		\node[box, right=10mm of comp] (mpp) {$\mathbf{M}''$};
		\node[box, below=10mm of mpp] (ext) 
		{Extract\\at $\Lambda_{j^*}$};
		\node[box, left=10mm of ext] (out) {Result\\$y_{j^*}$};
		
		\draw[->] (cover) -- (emb);
		\draw[->] (emb) -- (mp);
		\draw[->] (mp) -- node[lab, above]{$C_{\mathsf{obf}}$} (comp);
		\draw[->] (comp) -- (mpp);
		\draw[->] (mpp) -- (ext);
		\draw[->] (ext) -- (out);
		
		\begin{scope}[on background layer]
			\node[draw, dashed, rounded corners, fit=(cover)(emb)(mp), 
			inner sep=4pt, label=below:{\scriptsize Alice (client)}] {};
			\node[draw, dashed, rounded corners, fit=(comp), 
			inner sep=4pt, label=below:{\scriptsize untrusted cloud}] {};
			\node[draw, dashed, rounded corners, fit=(ext)(out), 
			inner sep=4pt] {};
		\end{scope}
	\end{tikzpicture}
	\caption{PD-FHC pipeline. Alice embeds one real and $L-1$ decoy circuits into the LSB plane at secret position sets $\{\Lambda_j\}$, sends the stego-images and wiring specification to the cloud, which applies the same Fredkin gate to \emph{every} LSB, and reads the real result back from $\Lambda_{j^*}$. Under coercion Alice instead reveals a decoy's position set and verifiable result.}
	\label{fig:pipeline}
\end{figure*}

\smallskip\noindent\emph{Property~4 (Cover Plausibility).} Cloud-based image processing is routine. Services that apply per-pixel transformations to image batches are a standard offering, and a Fredkin-gate circuit applied pixel-wise to LSBs is syntactically identical to a custom bitwise image filter. One caveat is essential: a cover service is only usable if it preserves the LSB plane bit-for-bit. General-purpose platforms that re-encode, recompress, or apply lossy transforms (typical JPEG pipelines, thumbnailing, many managed image CDNs) destroy the LSB plane and with it the embedded circuit, so they are \emph{not} valid covers. The declared service must be one whose contract is an exact, lossless, per-pixel bitwise operation on raw pixel values, for example a custom lossless filter on uncompressed PNG or raw buffers. Within that constraint the interaction is ordinary. Visual imperceptibility holds trivially: LSB 
changes shift each channel by at most $\pm 1$ (PSNR~$> 50$\,dB, Section~\ref{sec:eval}). The substantive question is not visual but \emph{statistical}, and it is where naive LSB embedding fails. PD-FHC fills every non-secret LSB with output from a cryptographically strong pseudorandom bit generator (CSPRNG), so the fill positions of $\mathbf{M}'$ carry uniform bits. A natural photograph does not: its LSB plane is correlated with local content, which is exactly 
what RS and sample-pair detectors exploit~\cite{fridrich2009}. Embedding into an ordinary photograph is therefore detectable, and the two properties we need (Property~3 uniformity and a natural-photo cover story) cannot hold at once.

We resolve this by scoping the cover, not by hoping the detector fails. Cover plausibility is satisfied when the \emph{declared cloud operation legitimately produces or consumes a high-entropy LSB plane}\footnote{We use ``high-entropy LSB plane'' informally, to mean an LSB plane whose per-position marginals are close to uniform and whose detector-relevant statistics carry little structure. We do not use it in the strict cryptographic sense of a min-entropy bound such as $H_\infty(X) \in \omega(\log n)$. Our actual security condition is the matched-marginal condition of Definition~\ref{def:matched}, which is a statement about matching laws, not about a min-entropy threshold.}, so that a uniform LSB plane is the expected input, not an anomaly. Concrete cover scenarios that meet this: dithering and LSB-noise-injection services, robustness testing for watermarking, processing of sensor-noise-dominated or encrypted-domain imagery, and re-quantization pipelines. 
Against such a declared service the warden's reference distribution is itself close to uniform in the LSB plane, so the detection gap is small, and exactly zero only in the idealised case where that reference is perfectly uniform. Property~3 and Property~4 then hold together up to that residual gap. We make this argument precise and test it in Section~\ref{sec:steganalysis}. The cost is honest scope: PD-FHC does \emph{not} securely hide computation inside an arbitrary holiday snapshot, and we explicitly do not claim it does.

\smallskip\noindent\emph{What this requires of Alice's real input bits.} Alice controls the decoy inputs and the fill, so she can force those to follow $\pi$. She cannot control her real inputs. They are her data. For a high-entropy real input, whose LSBs are already close to $\pi$, the matched-marginal premise holds and position privacy is exactly zero. For a structured real input, which is the common case and the one in our motivating example, the real bits do \emph{not} follow $\pi$. We do not pretend otherwise, and we do \emph{not} whiten them, because whitening the embedded bits would change what the pixel-wise Fredkin pass computes and Alice would have to correct it using her own knowledge of the inputs, which defeats the point of outsourcing. Instead we state the honest consequence. When the real input law deviates from $\pi$ by total-variation distance $\Delta$, the position-privacy advantage is no longer zero but is bounded by $\Delta$ (Lemma~\ref{lem:delta-degrade}), and the same $\Delta$ governs detectability by a warden. The real bits are $\ell$ of $n$ positions, so a structured real input perturbs the aggregate LSB statistics only within that small footprint, which is why $\Delta$ stays small even though it is not zero. Alice has one active lever if the footprint ever grows large. She controls the fill, so she can set the fill marginal to hold the global LSB mean at the target, cancelling the first-moment deviation that a warden aggregating over the whole plane would otherwise see. This costs nothing in computation, since it never touches the computed bits. It is only a first-moment correction, so it does not remove higher-order structure or defeat a learned detector, and at the small footprints of our examples it is unnecessary because the raw deviation is already below any detector's floor. Separately, hiding \emph{which} of the $L$ circuits is real requires the decoys to be statistically close to the real input, so that the real set is exchangeable with the decoy sets. That is the plausible-decoy assumption of Definition~\ref{def:plausible-decoy}, and it is coupled to this one: the same closeness that makes decoys believable is what drives Carol's set-level advantage toward the $1/L$ baseline.

\subsection{Protocol Phases}

\subsubsection{Phase 1: Location Generation (Alice)}
Alice runs $\LocGen(\mathcal{P}, L, \ell)$ to sample $L$ pairwise disjoint position sets, each of size 
$\ell$, uniformly at random from $\mathcal{P}$. She builds a single Fredkin-gate wiring for her function $f$ (Section~\ref{sec:funcspec}) and uses that \emph{same} wiring for the real circuit and every decoy. The circuits therefore share gate count, depth, and topology by construction. The decoys differ from the real circuit only in the control bits that select each gate's operation, and those bits are supplied through the LSB plane. No padding with dummy gates is needed, because the wiring is identical across all $L$ circuits to begin with.

\subsubsection{Phase 2: Multi-Location Embedding (Alice).}
Alice embeds inputs for all $L$ circuits into a set of cover images. For each circuit $j$ and each input bit $x_{j,i}$ ($i = 1, \ldots, \ell$), she writes $x_{j,i}$ at the $i$-th position in $\Lambda_j$ using $\Write$. All remaining LSB positions are filled with bits drawn from the same per-position law as her embedded inputs, using the CSPRNG to place them. Under the matched-marginal condition (Definition~\ref{def:matched}) this produces embedded images $\mathbf{M}'$ in which the real positions are exchangeable with the decoy and fill positions.

\subsubsection{Phase 3: Circuit Specification ($C_{\mathsf{obf}}$).}\label{sec:cobf}
Alice transmits $C_{\mathsf{obf}}$, the wiring of the shared circuit \emph{without} the meaning of its gates. As explained in Section~\ref{sec:funcspec}, this is sufficient for Carol to compute, because each Fredkin gate takes its concrete operation from the control bits in the LSB plane. The same wiring therefore evaluates the real function at the real positions and a different function at every decoy position, all in one pass. This already supplies a natural, intrinsic form of obfuscation: the real circuit is just one of many functions the same wiring realises, depending on the per-pixel control bits, so an observer who sees only the wiring cannot read off which function Alice cares about.

We deliberately do \emph{not} add an additional cryptographic obfuscation step. Also, we drop wire renaming, dummy gates, and execution-order randomization as security mechanisms; we drop these claims. General-purpose circuit obfuscation with cryptographic strength is subject to strong impossibility results~\cite{barak_impossibility_2012}, and dummy gates are hard to insert without leaving a recognisable signature (a dummy path that never interacts with the live path is detectable, and one that cancels out is detectable too). The only obfuscation PD-FHC relies on is the intrinsic kind above: the gate semantics are hidden by the embedded control bits, while the wiring stays visible. We restrict the word ``obfuscation'' to exactly this, and treat full circuit hiding (hiding the wiring as well) as an open problem.

\subsubsection{Phase 4: Uniform Computation (Carol).}
Carol receives the embedded images $\mathbf{M}'$ and the wiring specification $C_{\mathsf{obf}}$. For each gate $G_i$ in topological order, Carol applies the Fredkin gate to \emph{every} position $p \in \mathcal{P}$: she reads three LSBs from the gate's three input images at position $p$, computes 
$(c', x', y') = F(c, x, y)$, and writes the results to three output images at the same position $p$. Output images from one gate serve as input images for subsequent gates.

This uniform processing ensures that Carol treats all positions identically: real, decoy, and fill positions all undergo the same gate. Carol cannot determine which positions carry which role.

\subsubsection{Phase 5: Extraction (Alice).}
Alice receives the output images $\mathbf{M}''$ from Carol. For each circuit $j$, she reads the output bits from the designated output positions in $\Lambda_j$, obtaining $y_j = \Ext(\mathbf{M}'', \Lambda_j)$.

\subsubsection{Phase 6: Coercion Response (Alice).}
Under coercion, Alice executes the following strategy. When Eve demands revelation of a computation:
\begin{enumerate}
	\item Alice selects a previously unrevealed decoy index $j_i \neq j^*$.
	\item Alice reveals $(C_{j_i}, x_{j_i}, \Lambda_{j_i}, y_{j_i})$, providing a narrative explaining why $s_{j_i}$ was her ``real'' computation (e.g., ``I was checking image brightness 
	thresholds for print quality'').
	\item Eve verifies: she extracts bits from $\mathbf{M}''$ at positions $\Lambda_{j_i}$ and confirms 
	$y_{j_i} = f_{j_i}(x_{j_i})$. Verification always succeeds (by DCS correctness).
	\item If Eve demands another revelation, Alice repeats with a fresh decoy. After exhausting her revelation budget ($t < L - 1$ decoys), Alice claims no further circuits exist.
\end{enumerate}

\noindent Alice should prepare narratives for all decoy circuits \emph{before} deployment and pre-commit to a revelation order to ensure consistency across rounds.

\subsection{Cover Plausibility}\label{sec:cover}
The cover plausibility argument rests on three observations. First, cloud image processing is a 
multi-billion dollar industry. Sending batches of images to a cloud service for per-pixel transformations is routine and non-suspicious. Second, Fredkin gates applied pixel-wise to LSBs produce output images that are visually identical to the inputs: only the least significant bit of each pixel value may change, resulting in at most a $\pm 1$ shift per channel, which is imperceptible (we quantify this in Section~\ref{sec:eval} via PSNR and SSIM). Third, the wiring specification $C_{\mathsf{obf}}$ is syntactically a sequence of three-input, three-output operations on pixel channels, which is consistent with common image filter definitions.

A fourth observation is the one that matters against a steganalytic warden. Visual imperceptibility is necessary but not sufficient: the interaction is plausible only if $\mathbf{M}'$ is statistically indistinguishable from a legitimate input to the \emph{declared} service, including in its LSB plane. As argued under Property~4, this forces the cover scenario to be one whose legitimate inputs already carry a high-entropy LSB plane. Under that scoping, the LSB statistics of $\mathbf{M}'$ match the warden's reference distribution and the interaction is practically indistinguishable from a routine job, up to the statistical distance $\Delta$ quantified in Section~\ref{sec:steganalysis}. Outside it, the uniform LSB plane is itself the tell. We quantify this in Section~\ref{sec:steganalysis}.

This combination means that, \emph{for an appropriately chosen cover service}, the entire Alice--Carol 
interaction images in, transformation specification, processed images out, is indistinguishable from a standard cloud image processing job. This is the property that FHE fundamentally lacks: transmitting FHE ciphertexts to a cloud provider has no innocent explanation.

\section{Security Analysis}\label{sec:security}
We prove that PD-FHC provides (i)~computational privacy against Carol (Section~\ref{sec:priv-proof}) and (ii)~plausible deniability against Eve under multi-round coercion (Section~\ref{sec:deny-proof}). Results are stated over the abstract DCM and then instantiated for the image-based scheme. Full proofs are in Appendix~\ref{app:proofs}.

\subsection{Scope of the Guarantees}\label{sec:scope}
Before the proofs, we state plainly what is proven and what is assumed, since conflating the two has been a recurring source of confusion about deniable-computation schemes.

\smallskip\noindent\emph{Proven, unconditionally.} Correctness of extraction (DCS correctness); Hamming-weight preservation of Fredkin gates (Lemma~\ref{lem:hamming}); that a uniform LSB plane maps to a uniform LSB plane under any Fredkin circuit; and the union-bound reduction from multi-location privacy to single-position hiding (Theorem~\ref{thm:privacy}).

\smallskip\noindent\emph{Proven, conditional on a stated assumption.} Information-theoretic position privacy and existence hiding for the image instantiation (Corollary~\ref{cor:image-privacy}, 
Theorem~\ref{thm:existence}) hold \emph{given} the matched-marginal condition. When the condition holds only approximately, the guarantee degrades by exactly the statistical distance $\Delta$ between the embedded LSB law and the warden's reference law, made precise in Lemma~\ref{lem:delta-degrade}.

\smallskip\noindent\emph{Engineering heuristics, not proven.} We do \emph{not} claim a cryptographic obfuscation layer. The wiring is visible by necessity, and only the gate semantics are hidden, through the embedded control bits. Partial circuit hiding therefore rests on the \emph{semantic plausibility} of decoy circuits (Definition~\ref{def:plausible-decoy}), which is currently a manual, domain-specific construction and an open problem to formalize or automate. Side-channel freedom is argued under an idealized constant-time execution model and is not guaranteed against a real micro-architecture.

\smallskip\noindent\emph{Out of model.} Collusion between the cloud and the coercer, active deviation by the cloud, and steganalysis of the cover medium are outside the formal model. We address each in 
Sections~\ref{sec:assumptions} and~\ref{sec:discussion}.

\subsection{Computational Privacy}\label{sec:priv-proof}

We first record the structural property of the Fredkin gate that the privacy argument uses. It says that the gate only rearranges the bits present at a position and never changes how many of them are set, which is why a per-position distribution is carried unchanged from input to output.

\begin{lemma}[Hamming Weight Preservation]\label{lem:hamming}
	For any $(c, x, y) \in \{0,1\}^3$, if $(c, x', y') = F(c, x, y)$, then $c + x + y = c + x' + y'$.
\end{lemma}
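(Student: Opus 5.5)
The plan is a direct case split on the control bit $c$, using only the definition of $F$ from Section~\ref{sec:fredkin}. Since $F$ never alters its first output, the control contributes the same summand $c$ to both sides. It therefore suffices to show $x + y = x' + y'$, where the sums are taken over the integers rather than modulo~2.

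If $c = 0$, the definition gives $(x', y') = (x, y)$, and the equality is immediate. If $c = 1$, the definition gives $(x', y') = (y, x)$. Then $x' + y' = y + x = x + y$ by commutativity of integer addition. Adding $c$ back to both sides yields $c + x + y = c + x' + y'$ in both cases.

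It is worth recording the slightly stronger structural fact behind this. For each fixed $c$, the map $(x,y) \mapsto (x',y')$ is a coordinate permutation: the identity when $c = 0$ and the transposition when $c = 1$. So $F$ permutes the multiset of three bits at a position. Preservation of the Hamming weight is the special case obtained by counting ones. This permutation view is the form that Property~3 and the later privacy arguments actually invoke.

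I do not expect a genuine obstacle here. The one point needing care is to state explicitly that the sum is over $\mathbb{Z}$; over $\mathbb{F}_2$ the claim would only give parity preservation. The proof should also not rely on any property of the inputs beyond their membership in $\{0,1\}^3$. An exhaustive check of all eight inputs would serve equally well as a sanity confirmation.
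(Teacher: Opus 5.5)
Your proof is correct and matches the paper's own argument: a case split on the control bit, with $(x',y')=(x,y)$ when $c=0$ and $(x',y')=(y,x)$ when $c=1$, so the integer sum $c+x+y$ is unchanged. Your added remarks, that the sum is over $\mathbb{Z}$ rather than $\mathbb{F}_2$ and that $F$ permutes the three bits at a position, are accurate refinements but do not change the route.
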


\begin{proof}
	If $c = 0$, then $(x', y') = (x, y)$ and the sum is unchanged. If $c = 1$, then $(x', y') = (y, x)$ and $c + y + x = c + x + y$.
\end{proof}

\begin{theorem}[Privacy of Deniable Computation]\label{thm:privacy}
	Let $\Pi$ be a DCS over a suitable DCM $\mathcal{M}$ (Definition~\ref{def:suitable-dcm}). If all non-secret positions carry i.i.d.\ bits from $\pi$ and every gate $G \in \mathcal{G}$ preserves $\pi$ (Property~3), then for any adversary $\mathcal{A}$ in the computational privacy game (Definition~\ref{def:privacy-game}):
	\[
	\Adv_{\mathsf{priv}}^{\mathcal{A}}(n, L, \ell) \leq L \cdot \varepsilon(n)
	\]
	where $\varepsilon(n)$ is the advantage of distinguishing a single secret position from a random position in $\mathcal{M}$.
\end{theorem}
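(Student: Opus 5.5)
The plan is a union-bound reduction from the multi-location privacy game of Definition~\ref{def:privacy-game} to single-position hiding. First I will fix what $\varepsilon(n)$ means. For a position $p$, let $\varepsilon(n)$ bound the advantage of any adversary that, given the transcript $(\mathbf{M}', C_{\mathsf{obf}}, \mathbf{M}'')$, must decide whether $p$ is a designated secret position or a uniformly random position of $\mathcal{P}$. Equivalently, $\varepsilon(n)$ bounds how far the probability that $\mathcal{A}$ outputs a given secret position can exceed what blind guessing gives. Next I decompose the win event. Since the $\Lambda_j$ are pairwise disjoint,
\[
\Pr[\mathcal{A}\text{ wins}] = \sum_{j=1}^{L} \Pr[p^* \in \Lambda_j],
\]
and the baseline splits the same way: $L\ell/n = \sum_j \ell/n$. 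It therefore suffices to show $\Pr[p^* \in \Lambda_j] - \ell/n \le \varepsilon(n)$ for each $j$. Summing these $L$ bounds then gives the claimed $L\cdot\varepsilon(n)$.

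For the per-circuit bound I use a hybrid argument. In the hybrid world, the bits at $\Lambda_j$ are replaced by fresh i.i.d.\ draws from $\pi$, exactly like fill bits. By Property~3, and concretely by Lemma~\ref{lem:hamming} for the image case, every gate in $C_{\mathsf{obf}}$ acts position-wise and preserves $\pi$. So in the hybrid, the joint view at $\Lambda_j$ in both $\mathbf{M}'$ and $\mathbf{M}''$ has the same law as at any fill position. Because $\LocGen$ chooses $\Lambda_j$ uniformly and independently of that view, and Property~1 makes the processing identical across positions, the transcript is independent of which $\ell$ positions form $\Lambda_j$. Hence in the hybrid $\Pr[p^*\in\Lambda_j] = \ell/n$ exactly. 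Switching back to the real bits at $\Lambda_j$ changes this probability by at most the single-position distinguishing advantage. The reduction builds the distinguisher by running $\mathcal{A}$ and testing whether $p^*$ hits the challenge set, which bounds the gap by $\varepsilon(n)$.

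The main obstacle is the second step, relating the hybrid gap for a whole set $\Lambda_j$ of $\ell$ positions to a single-position quantity $\varepsilon(n)$ without picking up an extra factor of $\ell$. My first attempt would be a direct argument: guessing one element of $\Lambda_j$ is a one-position event, because $\mathcal{A}$ outputs a single $p^*$. I would condition on $p^*$ and use the symmetry of $\LocGen$ over positions, so that $\Pr[p^*\in\Lambda_j]-\ell/n$ equals the advantage of telling whether the particular position $p^*$ is secret. If that symmetry argument fails, I would define $\varepsilon(n)$ per set, or accept an $\ell$-dependent constant absorbed into $\varepsilon$. Either way, the statement as given takes $\varepsilon$ as the primitive, so the structural content of the proof is the disjoint-union decomposition together with gate-preservation of $\pi$.
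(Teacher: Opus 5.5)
There is a genuine gap at the step you flag yourself, and the symmetry argument you propose for it does not close it. With $\varepsilon(n)$ read as you define it (telling a secret position from a uniformly random one), let $q$ be a uniformly random element of $\Lambda_j$. Then
\[
\Pr[p^*\in\Lambda_j]-\frac{\ell}{n}=\ell\Bigl(\Pr[p^*=q]-\frac{1}{n}\Bigr),
\]
and the bracket is exactly the advantage of the single-position test ``output $1$ iff $q=p^*$''. So conditioning on $p^*$ reintroduces the factor $\ell$ rather than removing it.

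The factor can be necessary. Let $\pi$ be the point mass at $0$, which every Fredkin gate preserves, and let exactly one embedded bit equal $1$. Then $\mathcal{A}$ outputs the unique $1$-position and $\Adv_{\mathsf{priv}}^{\mathcal{A}}=1-L\ell/n$. Any test of whether a position is a uniformly random secret position (from $\bigcup_j\Lambda_j$) or a uniform position has advantage at most $1/(L\ell)-1/n$. The gap is therefore exactly $L\ell\cdot\varepsilon(n)$, which exceeds $L\cdot\varepsilon(n)$ whenever $\ell\ge 2$. Your ``equivalent'' reading (excess probability of hitting a given secret position) does no better: summing over the $\ell$ elements of $\Lambda_j$ again gives $\ell\,\varepsilon(n)$.

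You are therefore left with your fallback, and that is exactly what the paper does. It runs a hybrid over all $L\cdot\ell$ secret positions one at a time, gets $L\cdot\ell\cdot\varepsilon(n)$ from the triangle inequality, and then writes $L\cdot\varepsilon(n)$ by ``absorbing the constant $\ell$''. Grouping the union bound by circuits instead of positions is just a coarser version of the same argument. It gains nothing unless $\varepsilon$ is redefined per set, and that changes what the theorem says.

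A second, smaller slip is your claim that $\Pr[p^*\in\Lambda_j]=\ell/n$ exactly in the hybrid. You resample only the bits at $\Lambda_j$, so the other $L-1$ sets still carry real bits. The transcript is then independent of $\Lambda_j$ only conditionally on $\bigcup_{j'\neq j}\Lambda_{j'}$, which gives
\[
\Pr[p^*\in\Lambda_j]=\Pr\Bigl[p^*\notin\bigcup_{j'\neq j}\Lambda_{j'}\Bigr]\cdot\frac{\ell}{n-(L-1)\ell}.
\]
This equals $\ell/n$ only if $p^*$ is independent of where the other circuits sit. An adversary that can locate and avoid them pushes it up to $\ell/(n-(L-1)\ell)>\ell/n$. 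The paper avoids this by using $\mathsf{Hyb}_0$ as the reference world: all $L\cdot\ell$ secret bits are replaced by $\pi$-bits at once, so the transcript is independent of every position set and each $\Lambda_j$ is hit with probability exactly $\ell/n$. You should do the same and then move to the real world one circuit (or one position) at a time.
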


\begin{proof}[Proof sketch]
	By a union bound over the $L \cdot \ell$ secret positions. Each individual secret position $p \in \Lambda_j$ is embedded among $n - 1$ positions carrying i.i.d.\ bits from $\pi$. Since the gate family preserves $\pi$ (Property~3), the distribution at non-secret positions is unchanged after 
	computation. Hence the problem of identifying $p$ reduces to the single-position hiding problem with advantage $\varepsilon(n)$. The union bound over $L \cdot \ell$ positions yields the stated bound. We do not assert that this product is negligible in the cryptographic sense. The quantity $\varepsilon(n)$ is whatever single-position advantage the medium admits, and for the image instantiation we show next that it is exactly $0$ under the matched-marginal condition, which makes the whole bound $0$. Full proof in Appendix~\ref{app:privacy-proof}.
\end{proof}

\begin{definition}[Matched-marginal condition]\label{def:matched}
	The image instantiation satisfies the \emph{matched-marginal condition} if Alice prepares the embedding so that the bits at the decoy positions and the bits at the fill positions are drawn i.i.d.\ from the \emph{same} per-position law $\pi$ as the bits she embeds at the real positions, and all $L$ position sets are placed uniformly at random among the $n$ positions. Equivalently, the resulting LSB vector is \emph{exchangeable}: its law is invariant under permuting positions.
\end{definition}

\noindent The condition does not pin down $\pi$, and in particular does not assume $\pi = \mathrm{Ber}(1/2)$. Alice controls the decoy inputs and the fill, so she can force those two to $\pi$ by construction. The one thing she cannot force is her real input, which is whatever her data happens to be. So the condition is not free: it holds exactly when her real inputs also follow $\pi$, and only approximately otherwise, with the gap measured by the total-variation distance $\Delta$ (Lemma~\ref{lem:delta-degrade}). We are explicit about this rather than assuming it away. The condition also says nothing about whether $\pi$ resembles the LSB statistics of any particular cover. That separate question, which governs detectability by a steganalytic warden, is treated in Section~\ref{sec:steganalysis}.

\begin{corollary}[Image Instantiation: Position Privacy]\label{cor:image-privacy}
	Consider the image-based DCM $\mathcal{M}_{\mathsf{img}}$ with $n = h \cdot w \cdot 3$ 
	LSB positions and Fredkin gates. Under the matched-marginal condition (Definition~\ref{def:matched}), the position sets $\{\Lambda_j\}$ are statistically independent of the observed medium, so that for every adversary $\mathcal{A}$ (computationally unbounded or not):
	\[
	\Adv_{\mathsf{priv}}^{\mathcal{A}}(n, L, \ell) = 0,
	\]
	i.e., no strategy locates a real position with probability better than the $L\ell/n$ guessing baseline. This holds for \emph{every} $\pi$, not only the uniform one.
\end{corollary}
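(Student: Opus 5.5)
The plan is to prove independence directly, by an exchangeability and symmetry argument, rather than routing through the union bound of Theorem~\ref{thm:privacy}. Let $X = (X_p)_{p\in\mathcal{P}}$ be the LSB vector of $\mathbf{M}'$ and $S = \bigcup_j \Lambda_j$ the secret set. Under Definition~\ref{def:matched}, the position sets are drawn uniformly at random and independently of the bit values. Every position carries an i.i.d.\ draw from $\pi$, whether it is real, decoy or fill. So the joint law factorises as
\[
\Pr[X = x, (\Lambda_1,\dots,\Lambda_L) = \lambda] = \Pr[\LocGen = \lambda]\cdot \prod_{p\in\mathcal{P}} \pi(x_p).
\]
The product term does not depend on $\lambda$, so $(\Lambda_j)_j$ is independent of $X$.

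The first step is therefore to be precise about what ``drawn from $\pi$'' means at the real positions. Here $x_{j^*}$ is treated as a random variable with i.i.d.\ $\pi$ coordinates, independent of $\LocGen$. The same holds for the decoy inputs and the fill bits. This is exactly the content of the matched-marginal condition; the non-matched case is deferred to Lemma~\ref{lem:delta-degrade}.

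The second step extends the independence from $X$ to the full view $(\mathbf{M}', C_{\mathsf{obf}}, \mathbf{M}'')$. The wiring $C_{\mathsf{obf}}$ is fixed independently of the positions, since it is one shared wiring. $\mathbf{M}''$ is a deterministic function of $(\mathbf{M}', C_{\mathsf{obf}})$, because Carol applies the same Fredkin map at every $p$. The higher-order bits of the cover carry no information either, provided the cover image is chosen independently of $\LocGen$. Hence the whole view $V$ is a function of $(X, C_{\mathsf{obf}}, \text{cover})$, all independent of $(\Lambda_j)$, and so $V \perp S$. Note that Lemma~\ref{lem:hamming} (Property~3) is not even needed for this: determinism of $\Comp$ suffices. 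I would nonetheless remark that it is what guarantees the output marginals remain $\pi$.

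The third step bounds the win probability. The adversary's guess $p^*$ is a (possibly randomised) function of $V$, so it is independent of $S$. By symmetry of $\LocGen$, $\Pr[p\in S] = L\ell/n$ for every fixed $p$. Therefore
\[
\Pr[p^*\in S] = \sum_p \Pr[p^*=p]\,\Pr[p\in S] = L\ell/n,
\]
and the advantage is exactly $0$ for unbounded adversaries and any $\pi$.

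The main obstacle is step one. The statement reads Definition~\ref{def:matched} as making $\LocGen$ independent of the bit values, but the real inputs are Alice's data rather than samples. The argument only goes through if the real bits are modelled as random with law $\pi^{\otimes\ell}$, independent of the placement. I would state this explicitly as the interpretation of the condition, and observe that exchangeability of $X$ alone would not suffice without this independence from the placement.
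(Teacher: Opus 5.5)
Your proof is correct, and it is tighter than the paper's. The paper goes from exchangeability of the LSB vector to independence of $\Lambda_{j^*}$ from $\mathbf{M}'$. It then uses Lemma~\ref{lem:hamming} to argue that the Fredkin pass preserves exchangeability, so $\mathbf{M}''$ stays independent of $\{\Lambda_j\}$. You instead factor the joint law as $\Pr[\LocGen=\lambda]\prod_p\pi(x_p)$, which gives independence of the placement from the LSB vector directly. You then lift this to the whole view $(\mathbf{M}',C_{\mathsf{obf}},\mathbf{M}'')$, since that view is a deterministic function of the LSB vector, the public wiring and an independently chosen cover. The explicit sum $\sum_p\Pr[p^*=p]\Pr[p\in S]=L\ell/n$ finishes the argument.

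This gains precision at the two places where the paper's wording is loose. First, your remark that exchangeability alone is not enough is right. Uniformly random placement already makes the LSB vector exchangeable for any real input, for example a fixed all-ones real input against $\mathrm{Ber}(1/2)$ fill. In that case, outputting a position carrying a $1$ beats the $L\ell/n$ baseline. What is actually needed is that the law of the bits given the placement is the same product law for every placement, which is exactly your factorization. The paper's premise does contain this i.i.d.\ structure, but the ``Equivalently, \ldots exchangeable'' in Definition~\ref{def:matched} overstates matters. Second, independence of $\mathbf{M}''$ (or of the joint view) from the placement is a data-processing fact, not a consequence of $\mathbf{M}''$ being exchangeable. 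So neither Lemma~\ref{lem:hamming} nor the position-wise structure of the gates is needed for this corollary; any public deterministic computation would do.

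The paper's route, in exchange, keeps the corollary tied to the distribution-preservation story (Property~3) that Theorem~\ref{thm:privacy} and the output-marginal claims rely on.
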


\begin{proof}
	Under the matched-marginal condition the LSB vector $\mathbf{M}'$ is exchangeable: the bits at real, decoy, and fill positions are i.i.d.\ draws from one common law $\pi$, and the position sets are placed uniformly at random. Conditioned on the multiset of embedded bit-values, every assignment of those values to positions is equally likely, so the identity of the real position set $\Lambda_{j^*}$ is independent of $\mathbf{M}'$. By Lemma~\ref{lem:hamming} the Fredkin pass permutes the bits at each position and therefore preserves exchangeability, so $\mathbf{M}''$ is exchangeable as well and remains independent of $\{\Lambda_j\}$. The adversary's view $(\mathbf{M}', C_{\mathsf{obf}}, \mathbf{M}'')$ thus carries no information about which positions are real, and the probability of outputting a real position equals the guessing baseline $L\ell/n$. The advantage, the excess over this baseline, is $0$.
\end{proof}

\noindent\textbf{What this claim is, and what it is not.} The guarantee is \emph{information-theoretic} and \emph{conditional}: \emph{if} the matched-marginal condition holds, Carol's advantage at locating the real circuit among decoys and fill is exactly zero, for any $\pi$. The theorem is clean. The premise is not free. Alice enforces it for the decoy and fill bits, which she controls, but the real bits are her data, and the condition holds for them exactly only when her real inputs already follow $\pi$. For the structured inputs that motivate this work, they do not, and the exact zero weakens to the $\Delta$-bound of Lemma~\ref{lem:delta-degrade}. We flag the failure mode in Section~\ref{sec:assumptions} rather than burying it. Two further limits. First, this is \emph{position} privacy, not hiding $\pi$ itself: a warden who knows the declared service's legitimate-input law can still detect that $\mathbf{M}'$ deviates from it if $\pi$ does not match, which is the separate condition of Section~\ref{sec:steganalysis}. Second, hiding \emph{which} circuit is real, as opposed to \emph{where} the circuits are, additionally needs the decoys to be statistically close to the real input (Definition~\ref{def:plausible-decoy}). 

\subsection{Plausible Deniability}\label{sec:deny-proof}
We analyze Eve's advantage in the multi-round coercion game (Definition~\ref{def:coercion-game}) along two dimensions: existence distinguishing and circuit discovery.

\subsubsection{Existence Indistinguishability.}

\begin{theorem}[Existence Hiding]\label{thm:existence}
	In the multi-round coercion game with $t$ revelations out of $L$ circuits, the existence advantage satisfies:
	\[
	\Adv_{\mathsf{exist}}^{\mathcal{A}}(n, L, t) \leq (L - t) \cdot \ell \cdot \varepsilon(n - t \cdot \ell),
	\]
	and for the image instantiation $\varepsilon(\cdot) = 0$ under the matched-marginal condition, so the advantage is exactly $0$.
\end{theorem}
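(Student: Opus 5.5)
The plan is to condition on everything Eve sees in common between the two experiments and reduce the existence question to a hybrid argument over the hidden positions.

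\emph{Step 1: remove the revealed circuits.} In both $\mathsf{Exp}_0$ and $\mathsf{Exp}_1$, Alice reveals $t$ circuits $(C_{j_i}, x_{j_i}, \Lambda_{j_i}, y_{j_i})$. By DCS correctness, verification succeeds in both experiments, so it carries no information. I would couple the experiments so that the revealed tuples and the bits at $\bigcup_i \Lambda_{j_i}$ have the same joint law. Once the view is conditioned on these, Eve's remaining information is the restriction of $(\mathbf{M}', \mathbf{M}'')$ to the $n' = n - t\ell$ unrevealed positions, together with the public wiring $C_{\mathsf{obf}}$, which is identical in both worlds. Uniform processability (Property~1) ensures the Fredkin pass acts position by position. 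The computation therefore maps the restricted input to the restricted output through the same per-position map in both experiments, and the output adds no information beyond the restricted input.

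\emph{Step 2: hybrid over hidden positions.} In $\mathsf{Exp}_0$, all $n'$ unrevealed positions carry i.i.d.\ fill from $\pi$. In $\mathsf{Exp}_1$, $(L-t)\ell$ of them carry hidden real or decoy bits at uniformly random locations. I would define hybrids $H_0, \ldots, H_{(L-t)\ell}$, where $H_k$ replaces the first $k$ hidden bits with fresh draws from $\pi$. Consecutive hybrids differ in exactly one secret position placed among the $n'$ remaining positions, whose other bits are distributed identically in the two hybrids. Any distinguisher between $H_k$ and $H_{k+1}$ is then a single-position distinguisher on a medium of size $n'$, with advantage at most $\varepsilon(n - t\ell)$. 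Telescoping gives the bound $(L-t)\ell\,\varepsilon(n - t\ell)$, mirroring the union bound in Theorem~\ref{thm:privacy}.

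\emph{Step 3: the image case.} Under the matched-marginal condition (Definition~\ref{def:matched}), the hidden bits are i.i.d.\ $\pi$, exactly like fill. The restricted input law is therefore identical in both experiments. I would argue directly rather than via $\varepsilon$: by Lemma~\ref{lem:hamming}, the Fredkin circuit permutes bits within each position, so the pair $(\mathbf{M}', \mathbf{M}'')$ restricted to the unrevealed positions is a fixed function of an identically distributed input. The views are then equal in law, and the advantage is $0$, consistent with $\varepsilon(\cdot)=0$ from Corollary~\ref{cor:image-privacy}.

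\emph{Main obstacle.} The delicate step is the coupling in Step~1. In $\mathsf{Exp}_1$, the revealed position sets are chosen conditioned on disjointness from the hidden sets, and the revealed circuits share their wiring with the hidden ones. I must check that conditioning on the revealed data does not bias the law of the remaining positions. My first attempt would use the uniform random placement from $\LocGen$: conditioned on the revealed sets, the hidden sets are uniform among disjoint subsets of the complement. Combined with the i.i.d.\ structure, this makes the complement exchangeable and independent of the revealed data. I would also need to state explicitly that $\varepsilon$ is defined for media of size $n'$ with a random placement, so that each hybrid step reduces to it exactly.
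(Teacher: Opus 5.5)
Your proposal is correct and follows essentially the paper's own route. You use the uniform placement of $\LocGen$ to show that, conditioned on the $t$ revealed sets, the unrevealed sets are uniform in the complement. You then bound the advantage on the residual medium of $n - t\ell$ positions by the hybrid/union-bound argument behind Theorem~\ref{thm:privacy}. Finally, you note that under the matched-marginal condition the residual is i.i.d.\ $\pi$ in both experiments, so the advantage is $0$. The differences from the paper are minor: you inline the hybrid chain rather than invoking Theorem~\ref{thm:privacy} (a distinguishing argument, which fits the existence advantage more directly than the position-guessing privacy game), and you argue equality in law directly instead of citing $\varepsilon = 0$ from Corollary~\ref{cor:image-privacy}.
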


\begin{proof}[Proof sketch]
	After $t$ revelations, the remaining medium has $n' = n - t \cdot \ell$ positions. Distinguishing 
	$\mathsf{Exp}_0$ (all fill) from $\mathsf{Exp}_1$ ($(L-t) \cdot \ell$ secrets among fill) reduces to the privacy game on the residual medium. By Theorem~\ref{thm:privacy}, the advantage is at most $(L-t) \cdot \varepsilon(n')$. Full proof in Appendix~\ref{app:existence-proof}.
\end{proof}

\noindent\textbf{Implication.} Eve cannot prove that Alice is hiding additional computations beyond what she revealed. The unrevealed real and decoy positions are exchangeable with the fill positions, so Eve cannot tell them apart. This is the computational analogue of hidden-volume deniability: just as random free-space data on a disk is practically indistinguishable from a hidden encrypted volume, fill LSBs in an image are indistinguishable from embedded computation inputs once both follow the same law. This is an indistinguishability guarantee, not a guarantee of safety under coercion. The carrier capacity is public, so Eve can compute an upper bound on $L$ from the image size and the per-circuit footprint $\ell$, see that there is room for many more circuits, and keep demanding revelations on that basis. The scheme makes any specific further claim unprovable, but it does not stop Eve from pressing. This residual coercion pressure is inherent to deniable schemes with visible capacity and is not something our bounds remove.

\smallskip\noindent\textbf{What happens when the condition holds only approximately.} The zero above is exact only when the embedded law equals the law a warden expects. We make the degradation precise rather than calling it ``graceful''. Let $\mathcal{S}$ be the distribution of the embedded LSB plane that Alice actually produces and let $\mathcal{C}$ be the distribution a warden expects from a legitimate input to the declared service, and write $\Delta = \Delta(\mathcal{S}, \mathcal{C})$ for their total-variation distance.

\begin{lemma}[Degradation under mismatch]\label{lem:delta-degrade}
	For any detector $\mathcal{D}$ (computationally unbounded or not), the advantage with which $\mathcal{D}$ separates an embedded medium from a legitimate one is at most $\Delta$. Consequently the existence advantage in the coercion game is bounded by $\Delta$, and reduces to the exact $0$ of Theorem~\ref{thm:existence} when $\Delta = 0$.
\end{lemma}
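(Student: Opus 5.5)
The plan is the standard total-variation argument, followed by a data-processing step that carries the bound through the Fredkin pass and into the coercion game. Model a detector $\mathcal{D}$ as a (possibly randomized) map from a medium to a bit. Its distinguishing advantage is $\bigl|\Pr_{M\sim\mathcal{S}}[\mathcal{D}(M)=1]-\Pr_{M\sim\mathcal{C}}[\mathcal{D}(M)=1]\bigr|$. Fix the internal randomness of $\mathcal{D}$ first; by averaging, it suffices to treat deterministic detectors. A deterministic detector is the indicator of an acceptance set $A$. Its advantage is then $|\mathcal{S}(A)-\mathcal{C}(A)|$, which is at most $\sup_A|\mathcal{S}(A)-\mathcal{C}(A)|=\Delta$ by the definition of total variation. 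Averaging over the randomness cannot push this above $\Delta$, since the advantage is a convex combination of deterministic advantages. No computational assumption enters anywhere, so the bound holds for unbounded $\mathcal{D}$.

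Next, the bound must cover the full views seen by Carol and Eve, not only $\mathbf{M}'$. The transcript is $V=(\mathbf{M}',C_{\mathsf{obf}},\mathbf{M}'')$. Here $C_{\mathsf{obf}}$ is the same public wiring in both worlds, and $\mathbf{M}''$ is a deterministic function of $(\mathbf{M}',C_{\mathsf{obf}})$, namely the Fredkin pass of Phase~4. Hence $V$ is the image of $\mathbf{M}'$ under one fixed (deterministic) channel. The data-processing inequality for total variation then gives $\Delta(V_{\mathcal{S}},V_{\mathcal{C}})\le\Delta(\mathcal{S},\mathcal{C})=\Delta$. Lemma~\ref{lem:hamming} is not needed for this step; determinism of the pass suffices.

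For the coercion-game consequence, I would identify the two experiments of Definition~\ref{def:deny-advantage} with the two laws above. The natural choice is to take $\mathsf{Exp}_0$ as the ``legitimate'' world, with all unrevealed positions drawn from the reference law. $\mathsf{Exp}_1$ is the embedded world, with the hidden circuits present. The revealed data $(C_{j_i},x_{j_i},\Lambda_{j_i},y_{j_i})$ for the $t$ decoys is produced identically in both experiments. Verification succeeds in both by DCS correctness, so conditioned on the revealed material it contributes no extra distance. Eve's whole view in each experiment is therefore a common randomized function of a medium drawn from $\mathcal{S}$ versus $\mathcal{C}$, restricted to the unrevealed positions. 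Applying the previous paragraph to this joint view bounds $\Adv_{\mathsf{exist}}^{\mathcal{A}}$ by $\Delta$. When $\Delta=0$ the two views coincide in law, which recovers the exact $0$ of Theorem~\ref{thm:existence}.

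I expect the main obstacle to be this last identification. Definition~\ref{def:deny-advantage} phrases both experiments as ``fill from $\pi$'', whereas the lemma measures distance to a warden's reference law $\mathcal{C}$. These are different comparisons. The bound against $\mathcal{C}$ follows from the above. The bound between $\mathsf{Exp}_0$ and $\mathsf{Exp}_1$ as literally defined follows by the triangle inequality through $\mathcal{C}$, but only at cost $2\Delta$. To get $\Delta$ exactly, I would define $\Delta$ directly as the distance between the two residual-medium laws. This residual distance is at most the distance between the real-input law and $\pi$ on the $(L-t)\ell$ hidden positions, by a coupling and tensorization argument. I would state clearly which of the two readings is intended.
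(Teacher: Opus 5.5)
Your detector bound is the same argument as the paper's. The paper's entire proof is that, by the definition of total variation, any decision rule's acceptance probabilities under $\mathcal{S}$ and $\mathcal{C}$ differ by at most $\Delta$. Two of your steps are left implicit in the paper, but they are correct and worth keeping. One is averaging over the detector's coins. The other is the data-processing step: the transcript $V=(\mathbf{M}',C_{\mathsf{obf}},\mathbf{M}'')$ is a fixed deterministic image of $\mathbf{M}'$, so it cannot be more distinguishable than $\mathbf{M}'$.

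For the consequence, the paper offers only the sentence that the existence game ``is one such decision, played on the residual medium''. That silently makes exactly the identification you flag, of $\mathsf{Exp}_0$ with $\mathcal{C}$ and $\mathsf{Exp}_1$ with $\mathcal{S}$, so you will not find a resolution there. Your caution is not pedantic. Take $\Delta=\Delta(\mathcal{S},\mathcal{C})$ and the experiments of Definition~\ref{def:deny-advantage} literally. Then $\Adv_{\mathsf{exist}}$ does not depend on $\mathcal{C}$ at all, and choosing $\mathcal{C}=\mathcal{S}$ would force it to $0$. That fails whenever the real input law $Q$ gives the real block an expected Hamming weight different from that of $\pi^{\otimes\ell}$: the total residual weight then has different means in the two worlds, so some event separates them.

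Your second reading is therefore the one that makes the statement true. It is also how the paper actually uses $\Delta$ elsewhere: the appendix caveat and the assumptions paragraph call $\Delta$ the total-variation distance between the real input law and $\pi$. Your coupling argument for it is sound. Push the hidden-block law through the common placement-and-fill map. This bounds $\Adv_{\mathsf{exist}}$ by the total-variation distance between the joint law of the $(L-t)\ell$ hidden bits and $\pi^{\otimes(L-t)\ell}$. When the decoy inputs are drawn from $\pi$, that collapses to $\Delta(Q,\pi^{\otimes\ell})$.

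One correction: your $2\Delta$ triangle-inequality route needs $\Delta(\mathsf{Exp}_0,\mathcal{C})\le\Delta$ for the all-$\pi$ residual law of $\mathsf{Exp}_0$. That does not follow from $\Delta(\mathcal{S},\mathcal{C})\le\Delta$, unless $\Delta$ is read as a uniform bound over every medium the scheme can emit.
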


\begin{proof}
	By definition of total-variation distance, for any event (in particular any decision rule $\mathcal{D}$ outputs) the difference in its probability under $\mathcal{S}$ and under $\mathcal{C}$ is at most $\Delta(\mathcal{S}, \mathcal{C})$. So no detector separates the two distributions with advantage exceeding $\Delta$. The existence game is one such decision, played on the residual medium after revelations, so its advantage is bounded by the same $\Delta$. When $\mathcal{S} = \mathcal{C}$, $\Delta = 0$ and the bound is the exact zero of Theorem~\ref{thm:existence}.
\end{proof}

\noindent The point of Lemma~\ref{lem:delta-degrade} is that $\Delta$ is not a hand-waved ``small'' quantity. It is a measurable statistical distance, and Section~\ref{sec:steganalysis} gives the protocol for estimating it on a concrete cover class with concrete detectors. The honest reading is that everything rests on driving $\Delta$ to zero by cover choice, and that a non-zero $\Delta$ translates one-to-one into detection and existence-distinguishing advantage.

\subsubsection{Partial Circuit Hiding (Circuit Discovery).}
We now consider a strictly stronger adversary who \emph{knows} $L$ and must identify $j^*$ among the 
$L - t$ unrevealed circuits. We analyze three attack vectors.

\begin{theorem}[Partial Circuit Hiding]\label{thm:intent}
	In the multi-round coercion game, and under the matched-marginal condition, Eve's circuit-discovery advantage satisfies:
	\[
	\Adv_{\mathsf{intent}}^{\mathcal{A}}(n, L, t) \leq \varepsilon_{\mathsf{sem}} + \varepsilon_{\mathsf{side}},
	\]
	where $\varepsilon_{\mathsf{sem}}$ is the semantic-implausibility term of Definition~\ref{def:plausible-decoy} and $\varepsilon_{\mathsf{side}}$ bounds side-channel leakage. The position and wiring vectors contribute nothing.
\end{theorem}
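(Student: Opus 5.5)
The plan is to decompose Eve's view into the three attack vectors the statement names and bound each separately. Eve sees the position vector (the transcript $\mathbf{M}',\mathbf{M}''$ together with the revealed sets $\Lambda_{j_1},\dots,\Lambda_{j_t}$), the wiring vector ($C_{\mathsf{obf}}$), and whatever external evidence remains. That evidence has two parts: the semantic content of the unrevealed candidate functions, meaning how plausible each $f_j$ is as Alice's real purpose, and any side-channel trace of the computation. I will write Eve's guessing strategy as a function of these components. The steps are: show the first two components are independent of $j^*$, and then charge the last two to $\varepsilon_{\mathsf{sem}}$ and $\varepsilon_{\mathsf{side}}$ by a hybrid argument.

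\emph{Wiring vector.} By Definition~\ref{def:coercion-game} all $L$ circuits share one fixed wiring, with identical gate count, depth and topology. So $C_{\mathsf{obf}}$ is a deterministic object that does not depend on $j^*$, and conditioning on it changes nothing.

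\emph{Position vector.} Here I would reuse the exchangeability argument of Corollary~\ref{cor:image-privacy}, but now conditioned on the revealed data. Under the matched-marginal condition the LSB vector is exchangeable and the sets $\Lambda_j$ are uniformly placed. The Fredkin pass preserves exchangeability (Lemma~\ref{lem:hamming}). I then condition on the revealed sets and their contents. The relabelling of the remaining $L-t$ circuit indices is a permutation of the unrevealed sets that leaves the joint law of $(\mathbf{M}',\mathbf{M}'')$ invariant. Hence the posterior on $j^*$ given the transcript and the revealed sets is uniform over the $L-t$ unrevealed indices. Alice's revelation order is pre-committed and chooses among decoys independently of the data, so it carries no information either. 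I expect this conditioning to be the main obstacle. The revealed sets are not a uniformly random subset: they are forced to exclude $j^*$. The argument must therefore be made on the index labels, by symmetry over which unrevealed label is real, rather than on positions. It must also be checked that correctness-based verification cannot single out $j^*$. I would handle that by noting that verification succeeds for every circuit, so it is a constant event.

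\emph{Semantic and side-channel terms.} With the transcript contributing a uniform posterior, I consider a hybrid game in which the semantic evidence is replaced by evidence that is identically distributed for every unrevealed index. By Definition~\ref{def:plausible-decoy}, this swap changes Eve's success probability by at most $\varepsilon_{\mathsf{sem}}$. A second hybrid replaces the real execution trace with an idealized constant-time trace that is identical across positions, in line with Property~1; the cost of this swap is at most $\varepsilon_{\mathsf{side}}$. In the final hybrid, Eve's entire view is independent of $j^*$ given the set of unrevealed indices, so $\Pr[j'=j^*]=1/(L-t)$. The triangle inequality then gives the stated bound $\varepsilon_{\mathsf{sem}}+\varepsilon_{\mathsf{side}}$, with zero contribution from the position and wiring vectors.
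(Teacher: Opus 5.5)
Your proposal is correct and follows the paper's proof. It uses the same decomposition into position, wiring, semantic and side-channel channels: the wiring contributes nothing because all $L$ circuits share one $C_{\mathsf{obf}}$, the position channel contributes nothing because the $\Lambda_j$ are placed uniformly, and the remaining two channels are charged to the assumed residuals $\varepsilon_{\mathsf{sem}}$ and $\varepsilon_{\mathsf{side}}$. The differences are in rigour, not route: you fold the transcript $(\mathbf{M}',\mathbf{M}'')$ into the position channel and invoke exchangeability under the matched-marginal condition to get a uniform posterior on $j^*$ over the $L-t$ unrevealed labels, where the paper's position step uses only the independence of revealed and unrevealed sets, and you combine the channels by a hybrid/triangle-inequality argument where the paper simply says ``union bound''.
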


\begin{proof}
	We bound each attack vector separately and combine via union bound.
	
	\smallskip\noindent\emph{Position analysis.} All $L$ position sets are sampled uniformly and independently from $\mathcal{P}$, so the revealed sets are independent of the unrevealed ones. For any adversary using position information alone,
	\[
	\Pr[\mathcal{A}(\Lambda_{j_1}, \ldots, \Lambda_{j_t}) \text{ identifies } j^*] = \frac{1}{L - t},
	\]
	zero advantage over guessing.
	
	\smallskip\noindent\emph{Wiring analysis.} In the shared-wiring construction (Section~\ref{sec:funcspec}) the real circuit and all decoys use the \emph{same} wiring. They differ only in the control bits, which live in the LSB plane and are not part of $C_{\mathsf{obf}}$. Eve sees one wiring common to all $L$ circuits, so the wiring carries no information about $j^*$. This removes the depth- or structure-based distinguishability that a prior, mismatched-wiring construction would have suffered. There is no $\delta_{\mathsf{circuit}}$ term to bound.
	
	\smallskip\noindent\emph{Semantic analysis.} The one place advantage can leak is the \emph{meaning} of the gate-settings. If a decoy's implied function is implausible for Alice's claimed context (for instance, the control bits would make the wiring compute a cryptographic hash where a brightness check is claimed), Eve gains advantage. We do not have a proven bound here. We capture this as an assumption: each decoy is $\varepsilon$-plausible in the sense of Definition~\ref{def:plausible-decoy}, and we write $\varepsilon_{\mathsf{sem}}$ for the residual it leaves. Constructing and certifying such decoys is a manual, domain-specific task and an open problem to automate. We are explicit that $\varepsilon_{\mathsf{sem}}$ is assumed small, not proven so.
	
	\smallskip\noindent\emph{Side-channel analysis.} Under an idealised constant-time model, where Carol runs the same per-gate pass over all $n$ positions, the execution trace $T$ depends only on $(n, m)$ and not on which positions are real. Formally the mutual information $I(T \,;\, j^*) = 0$, giving $\varepsilon_{\mathsf{side}} = 0$ in that model. Real hardware leaks through caches and timing, so $\varepsilon_{\mathsf{side}}$ is an empirical residual, not a proven zero.
	
	\smallskip\noindent Combining via union bound gives $\Adv_{\mathsf{intent}} \leq 
	\varepsilon_{\mathsf{sem}} + \varepsilon_{\mathsf{side}}$.
\end{proof}

\noindent\textbf{Honest reading of this bound.} The bound is favourable precisely because the hard part has been moved into an explicit assumption rather than a dubious estimate. The position and wiring channels genuinely contribute nothing, the first by random placement and the second by the shared wiring. Everything that is actually hard about hiding the function (whether a decoy's implied behaviour is believable, and whether the hardware leaks) sits in $\varepsilon_{\mathsf{sem}}$ and $\varepsilon_{\mathsf{side}}$, both of which we flag as assumptions. We do \emph{not} claim a clean closed-form decay such as $O(1/m)$ in circuit size, because structural distinguishability of two genuinely different functions can run far deeper than any single scalar like circuit depth, and asserting otherwise would not be credible.

\subsubsection{Decoy Plausibility.}\label{sec:plausibility}
The bound above puts the whole weight of circuit hiding on $\varepsilon_{\mathsf{sem}}$, the residual that a semantically implausible decoy leaves. We make the plausibility requirement explicit.

\begin{definition}[$\varepsilon$-Plausible Decoy]\label{def:plausible-decoy}
	A decoy circuit $(C_j, x_j, y_j)$ is $\varepsilon$-plausible with respect to a context 
	$\Gamma$ (describing Alice's claimed activity) if: (i)~$C_j$ computes a function semantically consistent with $\Gamma$\footnote{By \emph{semantically consistent with $\Gamma$} we mean that the function $C_j$ computes is one that a genuine user in the claimed context $\Gamma$ would plausibly run. If Alice claims to run image-quality checks, a brightness-threshold function is consistent and a cryptographic hash is not. This is a judgement about plausibility in a context, not a formal property of the circuit, which is exactly why we leave it as an assumption rather than a proven bound.}; (ii)~the inputs $x_j$ are drawn from a distribution consistent with $\Gamma$; and (iii)~$\Pr[\text{Eve identifies } j \text{ as decoy} \mid \Gamma] \leq 1/L + \varepsilon$.
\end{definition}

\noindent\textbf{Example.} If Alice claims to run image-quality checks on a photo archive, plausible decoys include brightness threshold checks, colour-balance verification, and noise-level assessment. An implausible decoy would compute a cryptographic hash function with no image-processing interpretation.

\smallskip\noindent\textbf{Limitation.} Constructing semantically plausible decoys requires domain knowledge. Automating this process is an open problem and a direction for future work.

\subsection{Computational Exchangeability}\label{sec:exchangeability}
The security of PD-FHC rests on a symmetry property we call \emph{computational exchangeability}.\footnote{This matches the standard notion of exchangeability for a sequence of random variables, whose joint law is invariant under finite permutations of the indices~\cite{aldous_exchangeability_1985}. An i.i.d.\ sequence is exchangeable, but the converse fails, so exchangeability is the weaker and more honest condition to ask for here. We add the qualifier ``computational'' only because we apply it to an adversary's view rather than to a raw bit sequence.}

\begin{proposition}[Exchangeability]\label{prop:exchange}
	Let $\mathbf{y} = (y_1, \ldots, y_L)$ denote the output bit-vectors at the $L$ secret position sets after circuit evaluation. Under the matched-marginal condition (Definition~\ref{def:matched}), for any permutation $\sigma$ of circuit indices and any adversary $\mathcal{A}$:
	\[
	\bigl|\Pr[\mathcal{A}(\mathbf{y}) = 1] - \Pr[\mathcal{A}(\sigma(\mathbf{y})) = 1]\bigr| = 0.
	\]
\end{proposition}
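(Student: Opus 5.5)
The plan is to show that, under the matched-marginal condition, the random vector $\mathbf{y}=(y_1,\ldots,y_L)$ has the same joint law as $\sigma(\mathbf{y})$ for every permutation $\sigma$ of $\{1,\ldots,L\}$. Once the two laws coincide, any adversary $\mathcal{A}$, deterministic or randomized and computationally unbounded, has $\Pr[\mathcal{A}(\mathbf{y})=1]=\Pr[\mathcal{A}(\sigma(\mathbf{y}))=1]$. For a randomized $\mathcal{A}$ this follows by averaging over its coins, which are independent of $\mathbf{y}$. The stated difference is then exactly $0$. So the whole proof reduces to establishing the distributional identity $\mathbf{y}\overset{d}{=}\sigma(\mathbf{y})$.

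\emph{Step 1: the inputs are exchangeable.} By Definition~\ref{def:matched}, the input bit-vectors $x_1,\ldots,x_L$ at the $L$ position sets are i.i.d.\ draws from $\pi^{\otimes \ell}$. The position sets $\Lambda_1,\ldots,\Lambda_L$ are produced by $\LocGen$, which samples disjoint sets uniformly, so their joint law is invariant under relabelling the circuits. Hence the tuple $((\Lambda_1,x_1),\ldots,(\Lambda_L,x_L))$ is exchangeable in the circuit index, and so is $(x_1,\ldots,x_L)$.

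\emph{Step 2: the computation acts identically on every circuit.} The shared wiring $C_{\mathsf{obf}}$ is applied uniformly to every position (Property~1). The control bits that fix each gate's meaning are part of $x_j$ itself. Consequently $y_j=\Phi(x_j)$ for one fixed deterministic map $\Phi$ that is the same for every $j$, namely the composition of the pixel-wise Fredkin gates along the fixed wiring, read at the output positions of $\Lambda_j$. A coordinatewise application of a common map preserves exchangeability: $(\Phi(x_{\sigma(1)}),\ldots,\Phi(x_{\sigma(L)}))\overset{d}{=}(\Phi(x_1),\ldots,\Phi(x_L))$. This gives $\sigma(\mathbf{y})\overset{d}{=}\mathbf{y}$. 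Lemma~\ref{lem:hamming} is not strictly needed here, but it is consistent with the claim: each gate is a per-position permutation of bits, so no position-dependent processing breaks the symmetry.

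\emph{Expected obstacle.} The delicate point is Step 2. The claim that each $y_j$ depends only on $x_j$, through a common $\Phi$, requires that the $\ell$ input positions of each circuit (including its control and ancilla bits) are arranged in the same internal order relative to the wiring. In particular, the fixed constants used for NOT and AND must also lie inside $\Lambda_j$ and be drawn from the same law, or else be treated as part of $\Phi$. I would handle this by folding all circuit-internal constants into $\Phi$ and noting that the matched-marginal condition is stated exactly for the embedded inputs. One should also be honest that the statement concerns the \emph{unconditional} law of $\mathbf{y}$. If the real $x_{j^*}$ does not follow $\pi$, the equality fails and degrades to at most $\Delta$, in the spirit of Lemma~\ref{lem:delta-degrade}. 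That case lies outside the hypothesis, so the exact $0$ stands.
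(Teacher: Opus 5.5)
Your argument is correct. It rests on the same two facts as the paper's proof: position-wise processing rules out cross-circuit interference, and the matched-marginal condition makes the circuits' inputs symmetric. You close it differently, and more tightly.

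The paper shows only that each pair $(\Lambda_j,x_j)$ has the same marginal law as $(\Lambda_{\sigma(j)},x_{\sigma(j)})$, so that $y_j$ and $y_{\sigma(j)}$ agree in distribution coordinate by coordinate. It then reaches the joint statement by appealing to three further facts: the secrecy of the position sets, the preservation of $\pi$ on fill under the Fredkin pass (Lemma~\ref{lem:hamming}), and the exchangeability of the embedded LSB vector (Corollary~\ref{cor:image-privacy}). Equality of marginals alone does not give invariance of the joint law under $\sigma$. Also, $\mathcal{A}$ is handed $\mathbf{y}$ itself rather than the medium, so those extra facts do no work for the proposition as stated. They would matter only for a stronger claim in which $\mathcal{A}$ also sees $\mathbf{M}'$ and $\mathbf{M}''$.

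Your route pushes joint exchangeability of the i.i.d.\ inputs through one common coordinatewise map. It proves exactly the needed equality in law, and it shows that neither the uniformity of $\LocGen$ nor Lemma~\ref{lem:hamming} is needed. You also rightly make explicit that the statement concerns the unconditional law of $\mathbf{y}$. That is the only reading under which it can hold: with fixed inputs, $\mathbf{y}$ is a constant vector, and any $\sigma$ moving a coordinate with a different value is trivially detected.

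One adjustment concerns the paper's layout. The output at $\Lambda_j$ depends on $x_j$ and also on the bits at the same positions in the other input images, which the paper's own proof calls the fill at positions $\Lambda_j$. These bits are random, not constants, so they should not be folded into $\Phi$. Write $y_j=\Phi(x_j,z_j)$ instead, and note that under the matched-marginal condition the blocks $(x_j,z_j)$ are i.i.d.\ across $j$. Your argument then goes through verbatim.

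Separately, your aside on the mismatched case should read $2\Delta$ rather than $\Delta$. The triangle inequality through the fully matched world costs one hybrid step on each side of the permutation, and the factor $2$ is essentially attained in small examples. This lies outside the hypothesis and does not affect the proof.
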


\begin{proof}[Proof sketch]
	Each $y_j$ results from circuit evaluation on inputs $x_j$ embedded at uniformly random positions $\Lambda_j$, processed uniformly alongside i.i.d.\ fill bits. The position sets are independent uniform samples of disjoint subsets, processing is position-wise, and by Lemma~\ref{lem:hamming} the fill positions keep the same law $\pi$ after the Fredkin pass. Conditioned on the embedded values, every assignment of circuit labels to position sets is equally likely, so the joint law of the outputs is exactly invariant under permutation of circuit indices. The advantage is therefore $0$, not merely negligible. Full proof in Appendix~\ref{app:exchangeability-proof}.
\end{proof}

\noindent Exchangeability means there is no ``privileged'' real circuit that stands out from decoys 
based on observable computation artifacts. All $L$ circuits are equivalent from Eve's viewpoint.

\subsection{Assumptions and Their Failure Modes}\label{sec:assumptions}
The guarantees above rest on assumptions that an honest analysis must expose rather than hide. We list each, state what fails when it does, and give the mitigation.

\smallskip\noindent\textbf{Matched marginal on the real input.} Exact-zero position privacy (Corollary~\ref{cor:image-privacy}) needs the real bits to follow the same law $\pi$ as the decoy and fill bits. Alice controls the decoys and fill, so she meets it for them. She cannot control her real input. For a high-entropy real input the condition holds and the zero stands. For a structured real input, the case in our motivating example, it fails, and the advantage is bounded by the total-variation distance $\Delta$ between the real input law and $\pi$ (Lemma~\ref{lem:delta-degrade}). The mitigation is not to whiten the input, which would break the pixel-wise computation, but to keep the real footprint $\ell$ small relative to $n$ so that $\Delta$ stays below any detector's floor. If the footprint is forced large, Alice can additionally compensate the fill marginal to cancel the first-moment deviation, since she controls the fill and this does not touch the computed bits. That correction is first-moment only and does not address higher-order structure or a learned detector. We do not claim the structured case achieves zero.

\smallskip\noindent\textbf{High-entropy LSB plane of the cover.} The steganalysis argument assumes the declared service's legitimate inputs already carry a near-uniform LSB plane, so a uniform fill is the expected input. \emph{This is false for natural photographs}: LSB planes of real images are spatially correlated and biased, which is precisely what classical LSB steganalysis 
exploits~\cite{fridrich2009,pevny_using_2010}. We do not assume this tension away. As developed under Property~4 and tested in Section~\ref{sec:steganalysis}, the resolution is to scope the cover to a declared service whose legitimate inputs already carry a high-entropy LSB plane, so that the warden's reference distribution is close to uniform and the detection gap is small, exactly zero only in the idealised perfectly-uniform case. The price is that the guarantee is conditional on this scoping and degrades to a statistical-distance term $\Delta$ between the stego LSB distribution and the legitimate-input distribution whenever the scoping is imperfect. $\Delta$ is the quantity an honest deployment must measure and minimize.

One tempting alternative does \emph{not} work here. Content-adaptive, distortion-minimizing embedding 
(WOW, S-UNIWARD, HILL)~\cite{holub_designing_2012} is the standard way to hide payload inside a natural photograph while defeating steganalysis. It is incompatible with PD-FHC's computation requirement: adaptivity places payload at content-dependent positions and minimizes a distortion function, whereas PD-FHC needs the cloud to apply the same gate to \emph{every} position and to read deterministic results from fixed positions, which destroys the distortion-minimization the adaptive scheme depends on. So the natural-photo cover story is not recoverable by better embedding. It genuinely requires the high-entropy cover class above. Resistance to a dedicated learned steganalyzer~\cite{boroumand2019srnet} on a given cover class remains an empirical claim, not a proven one.

\smallskip\noindent\textbf{Ideal randomness.} $\LocGen$ and noise generation assume a sound CSPRNG. A 
biased or predictable source lets Carol or Eve regenerate candidate position sets and collapses both privacy and existence hiding. This is a standard, auditable implementation requirement.

\smallskip\noindent\textbf{Semantic decoy plausibility.} Partial circuit hiding assumes decoys are semantically 
indistinguishable from the real computation in the claimed context(Definition~\ref{def:plausible-decoy}). When a decoy circuit computes a function with no plausible interpretation in Alice's cover story, Eve gains advantage that none of our bounds capture. We treat automatic generation of plausible decoys as out of scope and an open problem, not as a solved component.

\smallskip\noindent\textbf{Constant-time execution.} The side-channel term $\varepsilon_{\mathsf{side}}$ is zero only under a constant-time model. Real hardware leaks 
through caches and timing, so the residual is an empirical quantity, not a proven $0$.

\section{Implementation and Evaluation}\label{sec:eval}

\subsection{Implementation}
We implemented PD-FHC in Python (approximately 1,250 lines) using NumPy for vectorized operations and Pillow for image handling. The implementation separates Alice (client: embedding, obfuscation, extraction) from Carol (cloud: uniform gate evaluation), maintaining proper security boundaries. The critical optimization is vectorized Fredkin-gate evaluation: NumPy broadcasting applies each gate to all $n$ pixel positions simultaneously, yielding approximately 258$\times$ speedup over a na\"{\i}ve loop implementation. Source code is available at   \texttt{[\url{https://github.com/shahzadssg/PD-FHC.git}]}.

\subsection{Experimental Setup}
\noindent\textbf{Hardware.} HP EliteBook 840 G8 with Intel Core i5-1135G7 (2.4\,GHz, 4 cores / 8 threads), 16\,GB RAM, Windows 11.

\smallskip\noindent\textbf{Methodology.} All measurements averaged over 100 runs. We report mean $\pm$ standard deviation. All benchmarks are single-threaded.

\smallskip\noindent\textbf{Circuits.}
\begin{itemize}
	\item \emph{Threshold-3}: $x > k$ for 3-bit values (5 Fredkin gates, $\ell = 3$)
	\item \emph{Small}: $(A \wedge C) \vee (\neg A \wedge B) \vee 
	(\neg B \wedge \neg C)$ (14 gates, $\ell = 3$)
	\item \emph{4-bit Adder}: full addition of two 4-bit values (88 gates, $\ell = 8$)
	\item \emph{8-bit Multiplier}: full 8-bit multiplication (302 gates, $\ell = 16$)
\end{itemize}

\noindent\textbf{Configurations.} Image sizes: $128^2 \times 3$ ($n = 49{,}152$), $256^2 \times 3$ ($n = 196{,}608$), $512^2 \times 3$ ($n = 786{,}432$). Scenario counts: $L \in \{2, 4, 8\}$.

\subsection{End-to-End Performance}
Table~\ref{tab:e2e} reports the end-to-end latency breakdown for $L = 4$ circuits.

\begin{table}[t]
	\centering
	\caption{End-to-end PD-FHC latency ($L = 4$ circuits). Communication = images transmitted $\times$ image size.}
	\label{tab:e2e}
	\small
	\begin{tabular}{lrrrrrr}
		\toprule
		Circuit & Gates & Image & Compute & Total & Comm. \\
		&       &       & (ms)    & (ms)  & (MB)  \\
		\midrule
		Threshold  & 5   & $128^2$ & 2.62  & 4.23  & 0.9 \\
		Small      & 14  & $128^2$ & 6.19 & 9.42 & 2.2 \\
		Small      & 14  & $256^2$ & 21.76 & 33.31 & 8.8 \\
		4-bit Add  & 88  & $128^2$ & 39.13 & 59.27 & 12.8 \\
		4-bit Add  & 88  & $256^2$ & 161.26 & 230.54 & 51.4 \\
		8-bit Mult & 302 & $256^2$ & 453.74 & 663.54 & 173.2 \\
		\bottomrule
	\end{tabular}
\end{table}

\noindent Key observations: computation time scales linearly with gate count and linearly with pixel count ($h \cdot w$). Multi-location overhead is minimal ($< 15\%$ between $L = 2$ and $L = 8$) because all locations are processed in the same pixel-wise pass. Communication overhead is dominated by the number of images: each gate requires input and output images, yielding $2m$ images for an $m$-gate circuit.

\subsection{Comparison with TFHE}\label{sec:fhe-comparison}
To contextualize PD-FHC's performance, we compare against TFHE~\cite{chillotti_tfhe_2020}, a state-of-the-art FHE library optimized for Boolean circuit evaluation. TFHE evaluates bootstrapped Boolean gates in approximately 13\,ms per gate on comparable hardware~\cite{chillotti_tfhe_2020}.

\begin{table}[t]
	\centering
	\caption{PD-FHC vs.\ TFHE for Boolean circuits. PD-FHC: $256^2$ image, $L = 4$. TFHE: published gate latency~\cite{chillotti_tfhe_2020}.}
	\label{tab:tfhe}
	\small
	\begin{tabular}{lrrrr}
		\toprule
		Circuit & Gates & PD-FHC & TFHE & Speedup \\
		&       & (ms)   & (ms) &       \\
		\midrule
		Threshold  & 5   & 2.62 & $\approx$65  & 24.8$\times$ \\
		Small      & 14  & 21.76 & $\approx$182 & 8.4$\times$ \\
		4-bit Add  & 88  & 161.26 & $\approx$1,144 & 7.1$\times$ \\
		8-bit Mult & 302 & 453.74 & $\approx$3,926 & 8.7$\times$ \\
		\bottomrule
	\end{tabular}
\end{table}

\noindent\textbf{Caveat.} This comparison is limited to Boolean circuits. TFHE provides semantic security under standard lattice assumptions and supports arbitrary Boolean computations without requiring a cover medium. PD-FHC trades generality for cover plausibility and deniability. The comparison demonstrates that PD-FHC's additional security properties come at comparable or lower 
computational cost for Boolean circuits of moderate size.

\subsection{Image Quality}
LSB modifications are imperceptible. For a $256 \times 256$ cover image with all LSBs randomized (worst case), the Peak Signal-to-Noise Ratio is PSNR $\approx 51.1$\,dB and the Structural Similarity 
Index is SSIM $> 0.99$. These values confirm that stego-images are visually identical to the originals, supporting the cover plausibility argument of Section~\ref{sec:cover}.

\subsection{Steganalysis Resistance}\label{sec:steganalysis}
Visual imperceptibility (above) is necessary but not sufficient. A warden runs a statistical detector, so the operative question is whether $\mathbf{M}'$ is distinguishable from a legitimate input to the declared service. We make this a concrete, falsifiable test rather than an assertion.

\smallskip\noindent\textbf{Security target.} Let $\mathcal{C}$ be the distribution of legitimate inputs to the declared cloud service and let $\mathcal{S}$ be the distribution of PD-FHC stego media $\mathbf{M}'$ for that service. Steganalytic indistinguishability is the statistical distance 
\[
\Delta = \tfrac12 \sum \bigl| \Pr_{\mathcal{S}}[\cdot] - \Pr_{\mathcal{C}}[\cdot] 
\bigr|,
\]
estimated operationally as $2\,(\mathrm{acc}^* - \tfrac12)$, where $\mathrm{acc}^*$ is the accuracy of the best available detector at separating $\mathcal{S}$ from $\mathcal{C}$. $\Delta = 0$ (detector at chance) is the goal, and $\Delta \to 1$ means trivially detectable.

\smallskip\noindent\textbf{Why the cover class is decisive.} Detection power comes entirely from the gap between $\mathcal{C}$ and $\mathcal{S}$ in the LSB plane. PD-FHC fills every non-secret LSB with CSPRNG output, so $\mathcal{S}$ has a uniform LSB plane. If $\mathcal{C}$ is the distribution of \emph{natural photographs}, its LSB plane is structured and the gap is large, so RS and sample-pair detectors~\cite{fridrich2009} estimate an embedding rate near zero on clean covers and near one on 
$\mathbf{M}'$: the medium is detected. If instead $\mathcal{C}$ is the input distribution of a service whose legitimate inputs already carry uniform LSBs (dithering, LSB-noise injection, sensor-noise or encrypted-domain imagery), then $\mathcal{C}$ and $\mathcal{S}$ share the same LSB distribution and no LSB-plane detector can separate them. We are careful about what this buys. If $\mathcal{C}$ is \emph{exactly} a uniform i.i.d.\ LSB plane, then $\mathcal{S} = \mathcal{C}$ and $\Delta = 0$, but this is a tautology, not a security achievement: the two are the same distribution. Real services produce high-entropy LSBs that are not perfectly uniform, so the operative $\Delta$ is the distance between that real $\mathcal{C}$ and our uniform $\mathcal{S}$, and it must be measured against real service data rather than a synthetic uniform baseline. A second caution matters here. Classical estimators like RS and chi-square are near-degenerate on uniform planes, so a low RS separation is necessary but not sufficient evidence of indistinguishability. The decisive test is a learned detector on the real cover class, which is why we require one below.

\smallskip\noindent\textbf{Measurement protocol for a deployment.} We supply a reference detector (RS and chi-square estimators, validated against a known-rate embedding sweep) and prescribe the following measurement on the chosen deployment cover class: (i)~draw $N$ clean covers from $\mathcal{C}$ and $N$ stego media from $\mathcal{S}$; (ii)~report the RS embedding-rate distributions for both and the resulting $\Delta$; (iii)~repeat against a learned detector (an ensemble classifier on rich 
models~\cite{fridrich2009} or a CNN steganalyzer~\cite{boroumand2019srnet}), which is the threshold a forensics reviewer will demand. We report $\Delta$ for each candidate cover class. We explicitly flag that a near-zero RS rate on a \emph{natural-photo} cover is \emph{not} achievable for PD-FHC and do not claim it. The security argument stands only for the high-entropy cover classes, and the table reports which classes pass a given detector and which do not.

We illustrate an end-to-end deployment. Alice wants to determine whether her medical test result (8-bit value $v$) exceeds a diagnostic threshold (8-bit value $\tau$), without the cloud learning either value.

\smallskip\noindent\textbf{Setup.} Alice prepares $L = 4$ circuits: the real 8-bit comparator 
($v > \tau$?) and three decoy circuits of matched complexity (pixel brightness check, color-balance verification, noise-level assessment). Each circuit requires $\approx$120 Fredkin gates with $\ell = 16$ input bits. Image size: $256 \times 256 \times 3$ ($n = 196{,}608$).

\smallskip\noindent\textbf{Resources.} Total secret positions: $L \cdot \ell = 64$ out of $n = 196{,}608$ (0.03\%). Images transmitted: $2 \times 120 = 240$. Total data: $\approx 47$\,MB. Computation time: $\approx 205$\,ms (interpolated from Table~\ref{tab:e2e} for a 120-gate circuit at 
$256^2$, $L=4$, approximate).

\smallskip\noindent\textbf{Under coercion.} Eve demands an explanation. Alice reveals Decoy~1: ``I was checking whether average pixel brightness exceeds a print-quality threshold.'' She provides the position set, circuit, inputs (a brightness value and threshold), and output. Eve verifies correctness. Eve cannot determine whether this was the real computation or a decoy, because all four 
circuits produce correct, verifiable results at their respective positions.

\subsection{Security Parameters}
Under the matched-marginal condition, Carol's position-privacy advantage is exactly $0$ at every image size, since the embedded LSB vector is exchangeable and so equidistributed whether or not any given position is real (Corollary~\ref{cor:image-privacy}). The image size governs only the guessing baseline $L\ell/n$, which shrinks as $n$ grows (e.g.\ $64/49{,}152 \approx 0.13\%$ at $128 \times 128$ and $64/196{,}608 \approx 0.03\%$ at $256 \times 256$ for $L\ell = 64$). For Eve, with $L = 4$ circuits and $t = 1$ revelation, existence hiding ensures that the $3$ unrevealed circuits (48 bits) are exchangeable with the $196{,}592$ residual fill positions, so she cannot single them out. We stress that these 
guarantees are conditional on the matched-marginal assumption and on non-collusion. Both are scrutinized in 
Sections~\ref{sec:assumptions} and~\ref{sec:discussion}.

\section{Discussion}\label{sec:discussion}
\noindent\textbf{Prevention against adversarial attacks.} We summarize which structural features of PD-FHC defeat each adversarial capability in the threat model, and where the defense ends.

\emph{Against the honest-but-curious cloud (Carol).} Uniform gate application (Property~1) means every position receives identical computation and memory access, so there is no timing or access-pattern signal distinguishing secret from noise positions. Distribution preservation (Property~3) means the bit statistics Carol observes before and after computation are unchanged, so there is no statistical signal either. Under the matched-marginal condition the embedded LSB vector has the same distribution whether or not any given position is real, which is why Carol's position-detection advantage is zero rather than merely small.

\emph{Against the coercive adversary (Eve).} Existence hiding (Theorem~\ref{thm:existence}) means that after any number of revelations the unrevealed positions remain exchangeable with the fill positions, so Eve cannot prove more circuits exist. Matched-complexity, semantically plausible decoys plus exchangeability (Proposition~\ref{prop:exchange}) mean that even an Eve who knows $L$ cannot single out the real circuit from observable artifacts. Verification of a revealed decoy always succeeds, so passing it leaks nothing.

\emph{Where the defense ends: collusion.} If Carol and Eve collude, the temporal and capability separation (Section~\ref{sec:separation}) that the model depends on disappears. A colluding Carol can log intermediate states or process positions non-uniformly and hand Eve a position-to-circuit correlation that the formal bounds assume away. Mitigations are partial and complementary, not part of the core guarantee: fresh random positions per execution to block cross-session correlation, splitting the computation across providers that are unlikely to all collude, and verifiable computation to detect non-uniform processing. We are explicit that PD-FHC does not prove security against a colluding cloud, and that closing this gap (for instance via ORAM-style access-pattern hiding) is future work.

\smallskip\noindent\textbf{Limitations.}
(1)~PD-FHC supports Boolean circuits only. Extending to arithmetic circuits is future work.
(2)~Storage overhead is significant: each gate requires input and output images, yielding $O(m \cdot h \cdot w)$ total storage. Image reuse via register-allocation-style techniques (graph coloring on the gate dependency graph, in the compiler sense rather than image re-coloring) can reduce this.
(3)~Circuit hiding is only partial. The wiring is visible to the cloud by necessity, and only the gate semantics are hidden, through the embedded control bits. We do not claim a separate cryptographic obfuscation layer, since general circuit obfuscation of the strong kind is impossible~\cite{barak_impossibility_2012}. Full circuit hiding, where the wiring is concealed too, is left to future work.
(4)~Decoy plausibility requires manual construction of semantically appropriate decoy circuits. Automating this is an open problem.
(5)~Security holds for single executions. Reusing positions across sessions enables correlation attacks. Fresh randomization per session mitigates this.
(6)~Alice must trust that Carol executes the circuit correctly. Unlike FHE with verification or TEEs with attestation, PD-FHC provides no mechanism to detect incorrect execution without revealing secret positions.
(7)~Cover scope: steganalytic indistinguishability holds only for cover services whose legitimate inputs already carry a high-entropy LSB plane (Section~\ref{sec:steganalysis}). PD-FHC does not hide 
computation inside arbitrary natural photographs, and content-adaptive embedding cannot recover that case without breaking uniform computation. Establishing $\Delta \approx 0$ against a learned steganalyzer for a given cover class is an empirical obligation per deployment, not a closed-form guarantee.

\section{Conclusion}\label{sec:conclusion}
We introduced PD-FHC, the first framework for deniable outsourced computation. By defining the notion of a Deniable Computation Medium and formalizing multi-round coercion games with both existence and circuit-discovery advantages, we established a foundation for provable deniability in cloud computing. Our image-based instantiation achieves information-theoretic position privacy under a matched-marginal assumption against an honest-but-curious cloud provider and provides 
plausible deniability against coercive adversaries who may demand multiple revelations. The implementation demonstrates competitive performance with TFHE for Boolean circuits while providing deniability that FHE cannot.

\smallskip\noindent\textbf{Future work.} Key directions include GPU acceleration for large-scale circuits (projected 10--50$\times$ speedup), automated decoy circuit generation, formal function-hiding obfuscation, extension to arithmetic circuits, integration with ORAM for collusion resistance, and analysis of security against quantum adversaries.

\bibliographystyle{IEEEtran}
\bibliography{fhc}

\begin{IEEEbiography}[{\includegraphics[width=1in,height=1.25in,clip,keepaspectratio]{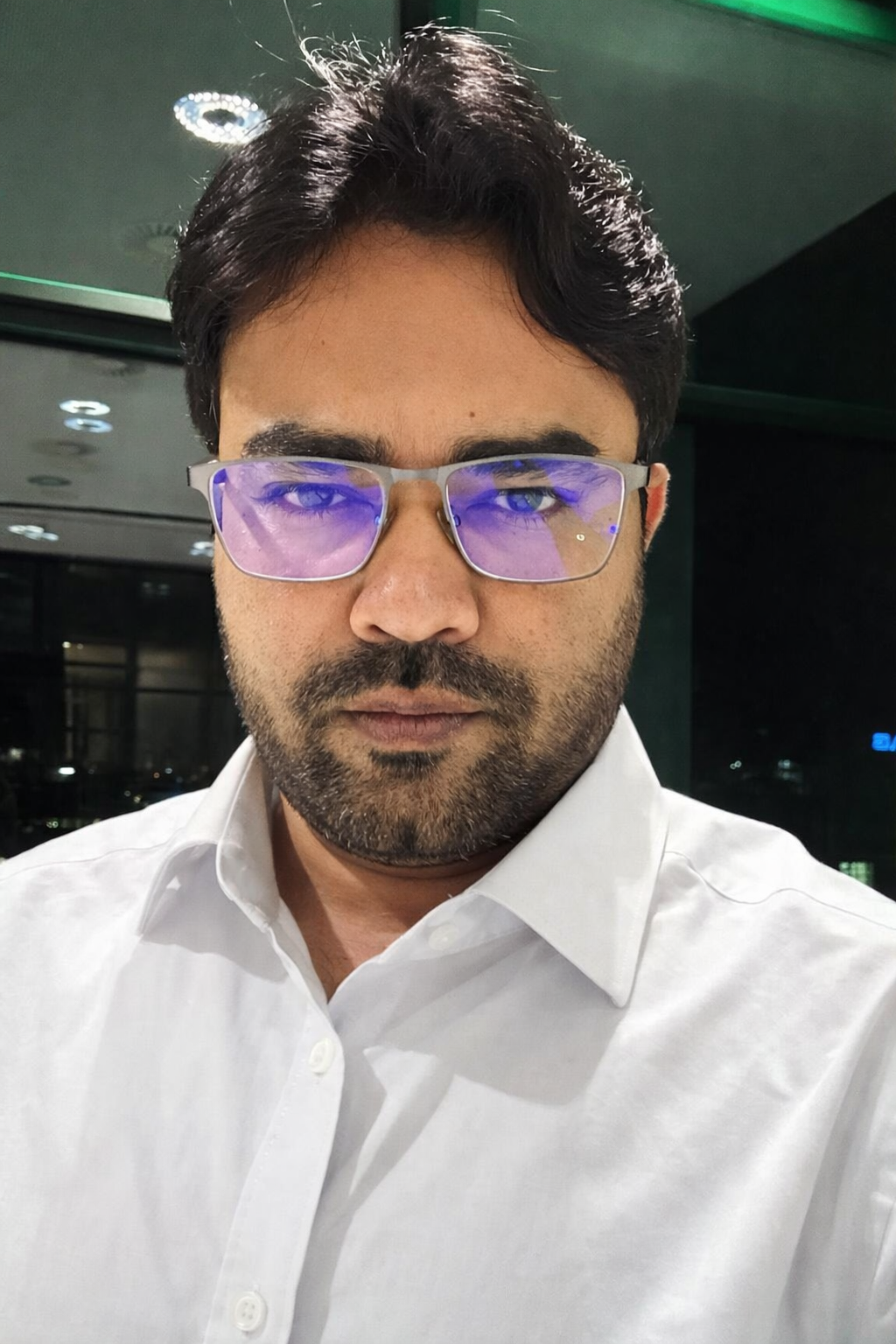}}]{Shahzad Ahmad} 
	received his Master's in Electronics Engineering with a Communication and Information Systems specialisation from AMU, Aligarh, India, in 2021. He has a Bachelor's in Electronics Engineering from Harcourt Butler Technological Institute, Kanpur, India, in 2016. He is currently pursuing PhD in Computer Science from the LIT Secure and Correct Systems Lab at Johannes Kepler University Linz, Austria. His research interests are Plausible Deniability, Cloud Security and Signal processing. 
\end{IEEEbiography}

\begin{IEEEbiography}[{\includegraphics[width=1in,height=1.25in,clip,keepaspectratio]{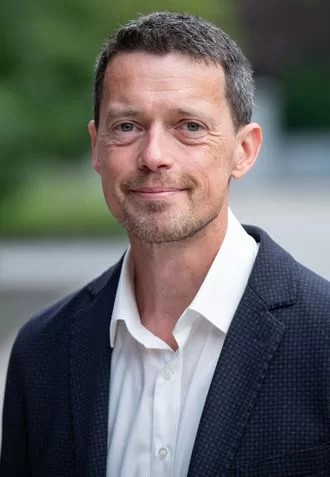}}]{Stefan Rass} holds degrees in mathematics and computer science from the Universitaet Klagenfurt (AAU). His research interests cover decision theory and game-theory with applications in system security, especially robotics security, as well as complexity theory, statistics, and information-theoretic security. He authored numerous papers related to practical security, security infrastructures, robot security, and applied statistics and decision theory in security. He participated in various nationally and internationally funded research projects, as well as being a contributing researcher in many EU projects and offering consultancy services to the industry. Currently, he is a full professor at the Johannes Kepler University Linz, Austria, as a member of the Secure and Correct Systems Lab.
\end{IEEEbiography}

\begin{IEEEbiography}[{\includegraphics[width=1in,height=1.25in,clip,keepaspectratio]{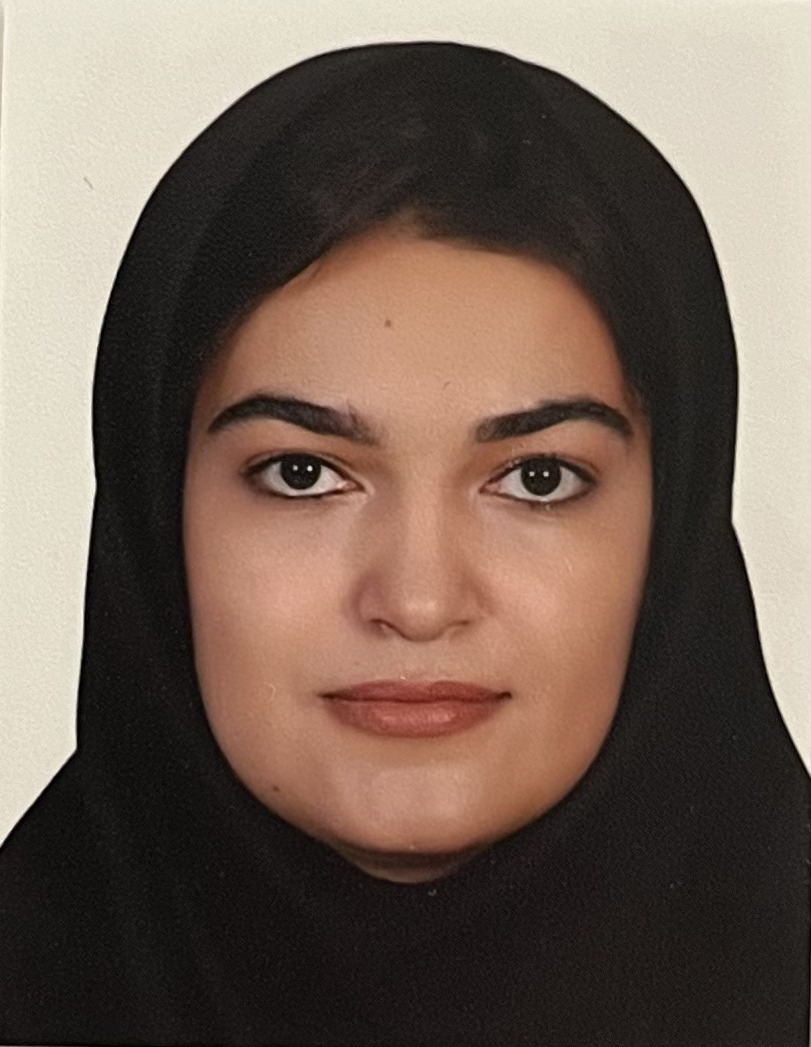}}]{Zahra} received the M.Sc. degree in applied mathematics, in 2021, and the Ph.D. degree in mathematics, in 2025, from the Amirkabir University of Technology, Tehran, Iran. She was a Visiting Researcher at Johannes Kepler University, Linz, Austria, from 2024 to 2025, and a Visiting Researcher at Politecnico di Milano, Milan, Italy, from 2025 to 2026. She is currently a Postdoctoral Researcher at Koç University, Istanbul, Turkey. Her fields of interests are cryptography and game theory.
\end{IEEEbiography}

\appendices\label{sec:appendix}

\section{Full Proofs}\label{app:proofs}
This appendix provides complete proofs for the theorems stated in Section~\ref{sec:security}.

\subsection{Proof of Theorem~\ref{thm:privacy} (Privacy of Deniable Computation)} \label{app:privacy-proof}

\begin{proof}
	We reduce the multi-location privacy game to the single-position hiding problem via a hybrid argument and union bound.
	
	\smallskip\noindent\textbf{Setup.} Consider the computational privacy game (Definition~\ref{def:privacy-game}). The challenger embeds $L$ circuits with $\ell$ bits each at pairwise disjoint, uniformly random position sets $\Lambda_1, \ldots, \Lambda_L \subset \mathcal{P}$, with all 
	remaining positions carrying i.i.d.\ bits from $\pi$. The adversary $\mathcal{A}$ receives 
	$(\mathbf{M}', C_{\mathsf{obf}}, \mathbf{M}'')$ and outputs a position $p^* \in \mathcal{P}$. For the image instantiation a position is a pixel-channel-LSB index $(r,c,k)$ (cf.\ Definition~\ref{def:dcm}), so $\mathcal{A}$ is naming one LSB it believes carries real data.
	
	\smallskip\noindent\textbf{Hybrid games.} Define a sequence of hybrid experiments 
	$\mathsf{Hyb}_0, \mathsf{Hyb}_1, \ldots, \mathsf{Hyb}_{L \cdot \ell}$. In $\mathsf{Hyb}_0$, no positions carry secrets (all $n$ positions are i.i.d.\ from $\pi$). In $\mathsf{Hyb}_k$, the first 
	$k$ secret positions (in some canonical ordering of $\bigcup_j \Lambda_j$) carry their true secret values, and the remaining $L \cdot \ell - k$ positions that would carry secrets instead carry i.i.d.\ bits from $\pi$. $\mathsf{Hyb}_{L \cdot \ell}$ is the real experiment.
	
	\smallskip\noindent\textbf{Single-step indistinguishability.} For any consecutive pair 
	$(\mathsf{Hyb}_{k-1}, \mathsf{Hyb}_k)$, the only difference is the value at a single position $p_k$: in $\mathsf{Hyb}_{k-1}$ it carries a random bit from $\pi$, and in $\mathsf{Hyb}_k$ it carries the secret bit $s_k$. Here $s_k$ is one fixed bit of Alice's input, a constant she chose rather than a random draw. Because $p_k$ is one position among $n - 1$ others that all carry i.i.d.\ $\pi$-bits, and $p_k$ is chosen uniformly at random and unknown to $\mathcal{A}$, telling the two hybrids apart reduces to the single-position hiding problem: deciding whether one unknown position holds a planted bit or a $\pi$-bit. By assumption, the distinguishing advantage for a single position is $\varepsilon(n)$.
	
	\smallskip\noindent\textbf{Distribution preservation under computation.} After circuit evaluation, each gate $G \in \mathcal{G}$ preserves $\pi$ at non-secret positions (Property~3 of 
	Definition~\ref{def:suitable-dcm}). For the image instantiation, this follows from Hamming weight 
	preservation (Lemma~\ref{lem:hamming}): the Fredkin gate $F(c,x,y) = (c, x', y')$ only swaps $x$ and $y$ when $c = 1$ and leaves them otherwise, so each output wire carries a bit that is a mixture of the input wires. If the input triple $(c, x, y)$ at a non-secret position consists of i.i.d.\ bits from a common law $\pi$ with $\Pr[\,\cdot = 1\,] = \rho$, then for any $b \in \{0,1\}$:
	\begin{align*}
		\Pr[x' = b] & = \Pr[c = 0]\cdot\Pr[x = b] + \Pr[c = 1]\cdot\Pr[y = b] \\
		 & = \Pr[x = b]
	\end{align*}
	and by the identical calculation $\Pr[y' = b] = \Pr[y = b]$, since $c$ is independent of $x$ and $y$ and $x, y$ share the law $\pi$. Each output wire therefore keeps the marginal $\pi$. The argument does not depend on $\rho = \tfrac12$; it holds for every $\pi$.
	By induction over the circuit depth $d$, after applying $m$ gates in sequence, each non-secret position retains its $\pi$ marginals, so the single-position hiding advantage $\varepsilon(n)$ applies to the post-computation medium as well.
	
	\smallskip\noindent\textbf{Union bound.} By the triangle inequality applied to the hybrid chain:
	\begin{align*}
		\Adv_{\mathsf{priv}}^{\mathcal{A}}(n, L, \ell) 
		&= \bigl|\Pr[\mathcal{A} \text{ wins in } 
		\mathsf{Hyb}_{L \cdot \ell}] - 
		\Pr[\mathcal{A} \text{ wins in } 
		\mathsf{Hyb}_0]\bigr| \\
		&\leq \sum_{k=1}^{L \cdot \ell} 
		\bigl|\Pr[\mathcal{A} \text{ wins in } 
		\mathsf{Hyb}_k] - 
		\Pr[\mathcal{A}\text{ wins in } \\
		 & ~~~~~\hspace{70pt}\qquad\mathsf{Hyb}_{k-1}]\bigr| \\
		&\leq L \cdot \ell \cdot \varepsilon(n).
	\end{align*}
	By Property~2, $L \cdot \ell$ is polynomial in $n$, so the bound is $L \cdot \ell$ times the single-position advantage $\varepsilon(n)$. We do not claim this is negligible in the cryptographic sense, and indeed it need not be. For the image instantiation we show below that $\varepsilon(n) = 0$ under the matched-marginal condition, which makes the entire bound $0$. In the theorem statement we write $L \cdot \varepsilon(n)$ as a simplified upper bound absorbing the constant~$\ell$.
\end{proof}

\noindent\textbf{Instantiation.} For the image-based DCM we make the single-position advantage exact. Under the matched-marginal condition (Definition~\ref{def:matched}), the real, decoy, and fill bits are i.i.d.\ draws from one common law $\pi$ and the position sets are placed uniformly at random, so the observed vector $M = (M_0, \ldots, M_{n-1})$ is exchangeable. Conditioned on the multiset of its values, every arrangement is equally likely, so a real position is statistically indistinguishable from any other position. Hence the posterior on the real position equals the prior,
\[
\Pr[\,p \text{ is real} \mid M = m\,] = \Pr[\,p \text{ is real}\,] = \tfrac{L\ell}{n},
\]
and the single-position advantage is $\varepsilon(n) = 0$ for every $\pi$. This requires no assumption that the bits are fair coins. It requires only that the real, decoy, and fill positions share the same law. Combined with the union-bound argument above, $\Adv_{\mathsf{priv}}^{\mathcal{A}} = 0$ under the matched-marginal condition.

\smallskip\noindent\emph{Caveat.} The zero is only as strong as the matched-marginal condition, and that condition is a real requirement on the real bits, not a free lunch. Alice forces the decoy and fill bits to $\pi$, which she can do because she controls them. The condition then holds for the real bits exactly when her real inputs already follow $\pi$. When they do not, which is the structured-input case, the zero weakens to a bound of $\Delta$, the total-variation distance between her real input law and $\pi$ (Lemma~\ref{lem:delta-degrade}). Because the real bits are $\ell$ of $n$ positions, this $\Delta$ is small in practice, but it is not zero and we do not claim it is. The same $\Delta$ governs detectability by a warden when the embedded law and the declared service's legitimate-input law differ. An honest deployment measures and minimises it. We return to this in Sections~\ref{sec:steganalysis} and~\ref{sec:assumptions}.

\subsection{Proof of Theorem~\ref{thm:existence} (Existence Hiding)}\label{app:existence-proof}

\begin{proof}
	We show that after $t$ coercion rounds, the residual positions Eve has not seen are exchangeable with the fill positions, so she cannot tell whether any unrevealed circuits remain.
	
	\smallskip\noindent\textbf{Post-revelation state.} After $t$ revelations, Eve knows:
	\begin{itemize}
		\item The revealed position sets $\{\Lambda_{j_1}, \ldots, \Lambda_{j_t}\}$ and their contents.
		\item The communication transcript $V$.
	\end{itemize}
	Eve's view of the remaining $n' = n - t \cdot \ell$ positions is the residual medium $\mathbf{M}''_{\mathsf{res}}$ obtained by excluding the $t \cdot \ell$ revealed positions.
	
	\smallskip\noindent\textbf{Experiment construction.} We construct two experiments matching 
	Definition~\ref{def:deny-advantage}:
	
	In $\mathsf{Exp}_0$: $\mathbf{M}''_{\mathsf{res}}$ contains $n'$ positions, all carrying i.i.d.\ bits from $\pi$ (no hidden circuits exist).
	
	In $\mathsf{Exp}_1$: $\mathbf{M}''_{\mathsf{res}}$ contains $n'$ positions, of which 
	$(L - t) \cdot \ell$ carry unrevealed secrets at uniformly random positions, and the rest carry i.i.d.\ bits from $\pi$.
	
	\smallskip\noindent\textbf{Independence of revealed and unrevealed positions.} The position sets 
	$\Lambda_1, \ldots, \Lambda_L$ are sampled as $L$ pairwise disjoint subsets of $\mathcal{P}$, each chosen uniformly at random. Given the revealed sets $\{\Lambda_{j_1}, \ldots, \Lambda_{j_t}\}$, the conditional 
	distribution of the unrevealed sets $\{\Lambda_j : j \notin \{j_1, \ldots, j_t\}\}$ is uniform 
	over all $(L - t)$-tuples of pairwise disjoint $\ell$-subsets of $\mathcal{P} \setminus \bigcup_{i=1}^{t} \Lambda_{j_i}$. In particular, no information about unrevealed positions 
	is leaked by revealed positions, because the position sampling is independent.
	
	\smallskip\noindent\textbf{Reduction to privacy game.} Distinguishing $\mathsf{Exp}_0$ from $\mathsf{Exp}_1$ is exactly the computational privacy game (Definition~\ref{def:privacy-game}) played on a medium of reduced size $n' = n - t \cdot \ell$ with $L' = L - t$ embedded circuits. By Theorem~\ref{thm:privacy}:
	\[
	\Adv_{\mathsf{exist}}^{\mathcal{A}}(n, L, t) \leq (L - t) \cdot \ell \cdot \varepsilon(n')
	\]
	where $\varepsilon(n') = 0$ for the image instantiation under the matched-marginal condition 
	(Corollary~\ref{cor:image-privacy}). Removing the $t \cdot \ell$ revealed positions leaves a residual medium whose remaining positions are still i.i.d.\ draws from the common law $\pi$ placed at random, so the residual is still exchangeable and the reduction inherits a zero advantage.
	
	\smallskip\noindent\textbf{Concrete bound.} For $n = 196{,}608$ (a $256 \times 256$ image), $L = 4$, $t = 1$, $\ell = 16$: the residual medium has $n' = 196{,}592$ positions, of which $48$ carry 
	unrevealed real-circuit bits. Under the matched-marginal condition these are exchangeable with the surrounding decoy and fill positions, so Eve's existence advantage is $0$. When the embedded law and the legitimate-input law differ, the advantage is bounded by the statistical distance $\Delta$ between them, which an honest deployment must control (Section~\ref{sec:assumptions}).
\end{proof}

\subsection{Proof of Proposition~\ref{prop:exchange} (Exchangeability)}\label{app:exchangeability-proof}

\begin{proof}
	We prove that, under the matched-marginal condition, the joint distribution of output vectors $(y_1, \ldots, y_L)$ is exactly invariant under permutation of circuit indices.
	
	\smallskip\noindent\textbf{Setup.} Let $\sigma$ be an arbitrary permutation of $\{1, \ldots, L\}$. We must show that for any efficient adversary $\mathcal{A}$:
	\[
	\bigl|\Pr[\mathcal{A}(y_1, \ldots, y_L) = 1] - \Pr[\mathcal{A}(y_{\sigma(1)}, \ldots, 
	y_{\sigma(L)}) = 1]\bigr| = 0.
	\]
	\smallskip\noindent\textbf{Position symmetry.} Each position set $\Lambda_j$ is a uniformly random $\ell$-subset of $\mathcal{P}$, sampled independently (subject to disjointness). The gate family applies the same function $G$ at every position in $\mathcal{P}$. Therefore, the joint distribution of $(\Read(\mathbf{M}'', p))_{p \in \Lambda_j}$ depends only on the input bits $(x_{j,1}, \ldots, x_{j,\ell})$ embedded at positions $\Lambda_j$ and on the fill bits at all other positions. Since fill bits are i.i.d.\ from $\pi$ and gates preserve $\pi$ (Lemma~\ref{lem:hamming}, Property~3), the fill contribution is identically distributed regardless of which circuit occupies which positions.
	
	\smallskip\noindent\textbf{Input independence.} Each circuit $j$ has its own input vector $x_j$ embedded at its own position set $\Lambda_j$. Because positions are disjoint and gates operate position-wise (the output at position $p$ depends only on inputs at $p$ across the $\kappa$ input images), the output $y_j$ is determined by $x_j$ and the fill at positions $\Lambda_j$. There is no cross-circuit interference: the computation at positions $\Lambda_{j_1}$ does not affect the computation at positions $\Lambda_{j_2}$ for $j_1 \neq j_2$.
	
	\smallskip\noindent\textbf{Permutation invariance.} Suppose we permute circuit labels so that the inputs $(x_1, \ldots, x_L)$ at positions $(\Lambda_1, \ldots, \Lambda_L)$ become $(x_{\sigma(1)}, \ldots, x_{\sigma(L)})$ at positions $(\Lambda_{\sigma(1)}, \ldots, \Lambda_{\sigma(L)})$. Since all position sets are drawn from the same uniform distribution, the pair $(\Lambda_j, x_j)$ has the same marginal distribution as $(\Lambda_{\sigma(j)}, x_{\sigma(j)})$. The output $y_j$ at position set $\Lambda_j$ with input $x_j$ has the same distribution as $y_{\sigma(j)}$ at position set $\Lambda_{\sigma(j)}$ with input $x_{\sigma(j)}$.
	
	To distinguish $(y_1, \ldots, y_L)$ from $(y_{\sigma(1)}, \ldots, y_{\sigma(L)})$, the adversary would need to link a specific output vector to a specific position set. But the position sets are 
	unknown to the adversary (they are secret), and the embedded LSB vector is exchangeable, so its law is invariant under permuting positions (Corollary~\ref{cor:image-privacy}, advantage $\varepsilon(n) = 0$ under the matched-marginal condition). Hence the adversary's distinguishing advantage is exactly $0$.
\end{proof}

\subsection{Detailed Hamming Weight Preservation Under Circuit Composition}\label{app:hamming-detail}
We prove that Hamming weight preservation holds through arbitrary circuit compositions, not just single gates.

\begin{lemma}[Hamming Weight Under Composition]\label{lem:hamming-composition}
	Let $C = (G_1, \ldots, G_m)$ be a circuit of $m$ Fredkin gates. For any position $p \in \mathcal{P}$, let $\mathbf{b}_{\mathsf{in}}(p)$ denote the vector of all input bits read from position $p$ across all input images, and $\mathbf{b}_{\mathsf{out}}(p)$ denote the vector of all output bits written to position $p$ across all output images. Then:
	\[
	\mathsf{wt}(\mathbf{b}_{\mathsf{out}}(p)) = \mathsf{wt}(\mathbf{b}_{\mathsf{in}}(p))
	\]
	where $\mathsf{wt}(\cdot)$ denotes Hamming weight.
\end{lemma}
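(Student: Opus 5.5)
The plan is to make the statement precise as an invariant of the \emph{wire state} at a fixed position $p$, and then argue by induction on the number of gates $m$, with Lemma~\ref{lem:hamming} supplying the single-gate step. First I fix what ``all input bits'' and ``all output bits'' mean. The circuit $C$ acts on a fixed collection of $W$ wires. Each wire is realised as a sequence of images, and the primary inputs include the ancillary constant images used in \eqref{eq:not}--\eqref{eq:and}. I define the state vector $\mathbf{s}_k(p) \in \{0,1\}^W$ as the bits at position $p$ on the current image of each wire after the first $k$ gates have been applied. Then $\mathbf{s}_0(p) = \mathbf{b}_{\mathsf{in}}(p)$ and $\mathbf{s}_m(p) = \mathbf{b}_{\mathsf{out}}(p)$. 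Intermediate images that are consumed by a later gate are deliberately not counted in either vector; only the live wire values are.

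The induction step goes as follows. Gate $G_k$ reads three images at position $p$, namely those on three distinct wires $(w_1,w_2,w_3)$. It writes $F$ of those bits to the three successor images on the same wires, and every other wire is left untouched. This uses only the per-position action of gates from Definition~\ref{def:dcm}. Hence $\mathbf{s}_k(p)$ agrees with $\mathbf{s}_{k-1}(p)$ outside $\{w_1,w_2,w_3\}$. On those three coordinates, Lemma~\ref{lem:hamming} gives equal sums, so $\mathsf{wt}(\mathbf{s}_k(p)) = \mathsf{wt}(\mathbf{s}_{k-1}(p))$. Chaining this from $k=1$ to $m$ yields the claim. An equivalent way to phrase the argument is that each gate acts on $\mathbf{s}(p)$ as a coordinate permutation: the identity if the control bit is $0$, and a transposition of the two data coordinates if it is $1$. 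A composition of data-dependent permutations is still a permutation of the multiset of bits, so the weight is invariant. This is also the exact form needed later to upgrade Property~3 from marginals to the whole per-position vector.

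The main obstacle is the modelling step, not the algebra. The invariant holds only if every output image of a gate replaces exactly one input wire, so that there is no fan-out by copying an image and no discarding of a wire. If $C_{\mathsf{obf}}$ permitted an image to feed two gates, or silently dropped garbage outputs, then weight would not be conserved. I would therefore state explicitly the reversible-wiring convention: fan-out is realised by a Fredkin gate with a constant ancilla, as in \eqref{eq:and}, and garbage wires are counted in $\mathbf{b}_{\mathsf{out}}(p)$. I would then check that the paper's construction of Section~\ref{sec:funcspec} obeys this convention. With that convention in place, the remaining argument is the routine induction above.
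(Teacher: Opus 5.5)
Your proof is correct and is essentially the paper's argument: a gate-by-gate induction with Lemma~\ref{lem:hamming} as the step. Your explicit live-wire state $\mathbf{s}_k(p)$, together with the convention that fan-out goes through ancilla Fredkin gates and garbage outputs are counted, makes precise what the paper's ``induction on depth'' leaves implicit (it treats one gate $G_d$ per level and simply asserts that the intermediate images carry the initial weight). For your closing aside about upgrading Property~3, weight invariance of a data-dependent permutation does not by itself preserve $\pi^{\otimes W}$ (compare-and-swap preserves weight but is not injective); you also need the composed map to be a bijection of $\{0,1\}^W$, which holds because each Fredkin gate is an involution, and then preservation follows since $\pi^{\otimes W}$ is constant on each weight class.
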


\begin{proof}
	We proceed by induction on the circuit depth $d$.
	
	\smallskip\noindent\textbf{Base case ($d = 1$).} A single Fredkin gate $G_1$ reads $(c, x, y)$ from position $p$ across three input images and writes $(c', x', y')$ to three output images. By Lemma~\ref{lem:hamming}, $c + x + y = c' + x' + y'$, so the Hamming weight is preserved.
	
	\smallskip\noindent\textbf{Inductive step.} Suppose the claim holds for circuits of depth $d - 1$. At depth $d$, the gate $G_d$ reads its inputs from output images of previous gates. At position $p$, these intermediate images carry bits whose total Hamming weight equals the initial input weight (by the inductive hypothesis). Gate $G_d$ permutes its three input bits (conditionally 
	swapping the data pair) without changing the sum. Hence the total Hamming weight at $p$ across all images at depth $d$ equals the total at depth $d - 1$, which equals the initial weight.
	
	\smallskip\noindent\textbf{Consequence for distribution preservation.} If the initial bits at a non-secret position $p$ are i.i.d.\ draws from a common law $\pi$, then the bits at $p$ after the full circuit are a permutation of the original bits (the Fredkin gate only conditionally swaps the data pair). Since any permutation of i.i.d.\ random variables has the same marginal law, each output bit at $p$ remains distributed as $\pi$. The argument never uses $\pi = \mathrm{Ber}(1/2)$. This is precisely Property~3 of 
	the DCM, confirming that the image instantiation satisfies distribution preservation for circuits of arbitrary depth and for every $\pi$.
\end{proof}

\balance

\end{document}